\documentclass[conference,10pt]{IEEEtran}
\IEEEoverridecommandlockouts
\usepackage{cite}
\usepackage[T1]{fontenc}
\usepackage{graphicx}

\usepackage{amssymb}
\usepackage{amsmath}

\usepackage{paralist}
\usepackage{subfigure}
\usepackage{booktabs} 
\usepackage{algorithm}
\usepackage{algorithmic}
\usepackage{amsthm}
\usepackage{multirow}
\usepackage{microtype} 
\usepackage{tablefootnote}
\usepackage{balance}
\usepackage{framed}
\usepackage{url}
\usepackage{xcolor}
\usepackage[paper=letterpaper,top=0.7in,bottom=1.0in,right=0.68in,left=0.68in]{geometry}

% vmr-symbols-vecbold.tex
%
% Symbols for vectors, matrices, and random quantities
% in the flavor of boldface vectors and matrices

\usepackage{amssymb}
\usepackage{amsfonts}
\usepackage{mathrsfs}
\usepackage{xspace}
\usepackage{bm}
\usepackage{upgreek}

\newcommand{\safemath}[2]{\newcommand{#1}{\ensuremath{#2}\xspace}}

%%%%% Boldface Math %%%%%%

% boldface small upright math letters

\safemath{\bma}{\mathbf{a}}
\safemath{\bmb}{\mathbf{b}}
\safemath{\bmc}{\mathbf{c}}
\safemath{\bmd}{\mathbf{d}}
\safemath{\bme}{\mathbf{e}}
\safemath{\bmf}{\mathbf{f}}
\safemath{\bmg}{\mathbf{g}}
\safemath{\bmh}{\mathbf{h}}
\safemath{\bmi}{\mathbf{i}}
\safemath{\bmj}{\mathbf{j}}
\safemath{\bmk}{\mathbf{k}}
\safemath{\bml}{\mathbf{l}}
\safemath{\bmm}{\mathbf{m}}
\safemath{\bmn}{\mathbf{n}}
\safemath{\bmo}{\mathbf{o}}
\safemath{\bmp}{\mathbf{p}}
\safemath{\bmq}{\mathbf{q}}
\safemath{\bmr}{\mathbf{r}}
\safemath{\bms}{\mathbf{s}}
\safemath{\bmt}{\mathbf{t}}
\safemath{\bmu}{\mathbf{u}}
\safemath{\bmv}{\mathbf{v}}
\safemath{\bmw}{\mathbf{w}}
\safemath{\bmx}{\mathbf{x}}
\safemath{\bmy}{\mathbf{y}}
\safemath{\bmz}{\mathbf{z}}
\safemath{\bmzero}{\mathbf{0}}
\safemath{\bmone}{\mathbf{1}}

% boldface small italic math letters
\bmdefine{\biad}{a}
\bmdefine{\bibd}{b}
\bmdefine{\bicd}{c}
\bmdefine{\bidd}{d}
\bmdefine{\bied}{e}
\bmdefine{\bifd}{f}
\bmdefine{\bigd}{g}
\bmdefine{\bihd}{h}
\bmdefine{\biid}{i}
\bmdefine{\bijd}{j}
\bmdefine{\bikd}{k}
\bmdefine{\bild}{l}
\bmdefine{\bimd}{m}
\bmdefine{\bind}{n}
\bmdefine{\biod}{o}
\bmdefine{\bipd}{p}
\bmdefine{\biqd}{q}
\bmdefine{\bird}{r}
\bmdefine{\bisd}{s}
\bmdefine{\bitd}{t}
\bmdefine{\biud}{u}
\bmdefine{\bivd}{v}
\bmdefine{\biwd}{w}
\bmdefine{\bixd}{x}
\bmdefine{\biyd}{y}
\bmdefine{\bizd}{z}

\bmdefine{\bixid}{\xi}
\bmdefine{\bilambdad}{\lambda}
\bmdefine{\bimud}{\mu}
\bmdefine{\bithetad}{\theta}
\bmdefine{\biphid}{\phi}
\bmdefine{\bideltad}{\delta}

\safemath{\bmia}{\biad}
\safemath{\bmib}{\bibd}
\safemath{\bmic}{\bicd}
\safemath{\bmid}{\bidd}
\safemath{\bmie}{\bied}
\safemath{\bmif}{\bifd}
\safemath{\bmig}{\bigd}
\safemath{\bmih}{\bihd}
\safemath{\bmii}{\biid}
\safemath{\bmij}{\bijd}
\safemath{\bmik}{\bikd}
\safemath{\bmil}{\bild}
\safemath{\bmim}{\bimd}
\safemath{\bmin}{\bind}
\safemath{\bmio}{\biod}
\safemath{\bmip}{\bipd}
\safemath{\bmiq}{\biqd}
\safemath{\bmir}{\bird}
\safemath{\bmis}{\bisd}
\safemath{\bmit}{\bitd}
\safemath{\bmiu}{\biud}
\safemath{\bmiv}{\bivd}
\safemath{\bmiw}{\biwd}
\safemath{\bmix}{\bixd}
\safemath{\bmiy}{\biyd}
\safemath{\bmiz}{\bizd}

\safemath{\bmxi}{\bixid}
\safemath{\bmlambda}{\bilambdad}
\safemath{\bmmu}{\bimud}
\safemath{\bmtheta}{\bithetad}
\safemath{\bmphi}{\biphid}
\safemath{\bmdelta}{\bideltad}

% boldface capital upright math letters
\safemath{\bA}{\mathbf{A}}
\safemath{\bB}{\mathbf{B}}
\safemath{\bC}{\mathbf{C}}
\safemath{\bD}{\mathbf{D}}
\safemath{\bE}{\mathbf{E}}
\safemath{\bF}{\mathbf{F}}
\safemath{\bG}{\mathbf{G}}
\safemath{\bH}{\mathbf{H}}
\safemath{\bI}{\mathbf{I}}
\safemath{\bJ}{\mathbf{J}}
\safemath{\bK}{\mathbf{K}}
\safemath{\bL}{\mathbf{L}}
\safemath{\bM}{\mathbf{M}}
\safemath{\bN}{\mathbf{N}}
\safemath{\bO}{\mathbf{O}}
\safemath{\bP}{\mathbf{P}}
\safemath{\bQ}{\mathbf{Q}}
\safemath{\bR}{\mathbf{R}}
\safemath{\bS}{\mathbf{S}}
\safemath{\bT}{\mathbf{T}}
\safemath{\bU}{\mathbf{U}}
\safemath{\bV}{\mathbf{V}}
\safemath{\bW}{\mathbf{W}}
\safemath{\bX}{\mathbf{X}}
\safemath{\bY}{\mathbf{Y}}
\safemath{\bZ}{\mathbf{Z}}

\safemath{\bZero}{\mathbf{0}}
\safemath{\bOne}{\mathbf{1}}
\safemath{\bDelta}{\mathbf{\Delta}}
\safemath{\bLambda}{\mathbf{\UpLambda}}
\safemath{\bPhi}{\mathbf{\Upphi}}
\safemath{\bSigma}{\mathbf{\Upsigma}}
\safemath{\bOmega}{\mathbf{\Upomega}}
\safemath{\bTheta}{\mathbf{\Uptheta}}

% boldface capital italic math letters
\bmdefine{\biAd}{A}
\bmdefine{\biBd}{B}
\bmdefine{\biCd}{C}
\bmdefine{\biDd}{D}
\bmdefine{\biEd}{E}
\bmdefine{\biFd}{F}
\bmdefine{\biGd}{G}
\bmdefine{\biHd}{H}
\bmdefine{\biId}{I}
\bmdefine{\biJd}{J}
\bmdefine{\biKd}{K}
\bmdefine{\biLd}{L}
\bmdefine{\biMd}{M}
\bmdefine{\biOd}{N}
\bmdefine{\biPd}{O}
\bmdefine{\biQd}{P}
\bmdefine{\biRd}{R}
\bmdefine{\biSd}{S}
\bmdefine{\biTd}{T}
\bmdefine{\biUd}{U}
\bmdefine{\biVd}{V}
\bmdefine{\biWd}{W}
\bmdefine{\biXd}{X}
\bmdefine{\biYd}{Y}
\bmdefine{\biZd}{Z}

\bmdefine{\biDelta}{\Delta}
\bmdefine{\biLambda}{\Lambda}
\bmdefine{\biPhi}{\Phi}
\bmdefine{\biSigma}{\Sigma}
\bmdefine{\biOmega}{\Omega}
\bmdefine{\biTheta}{\Theta}

\safemath{\bimA}{\biAd}
\safemath{\bimB}{\biBd}
\safemath{\bimC}{\biCd}
\safemath{\bimD}{\biDd}
\safemath{\bimE}{\biEd}
\safemath{\bimF}{\biFd}
\safemath{\bimG}{\biGd}
\safemath{\bimH}{\biHd}
\safemath{\bimI}{\biId}
\safemath{\bimJ}{\biJd}
\safemath{\bimK}{\biKd}
\safemath{\bimL}{\biLd}
\safemath{\bimM}{\biMd}
\safemath{\bimN}{\biNd}
\safemath{\bimO}{\biOd}
\safemath{\bimP}{\biPd}
\safemath{\bimQ}{\biQd}
\safemath{\bimR}{\biRd}
\safemath{\bimS}{\biSd}
\safemath{\bimT}{\biTd}
\safemath{\bimU}{\biUd}
\safemath{\bimV}{\biVd}
\safemath{\bimW}{\biWd}
\safemath{\bimX}{\biXd}
\safemath{\bimY}{\biYd}
\safemath{\bimZ}{\biZd}

\safemath{\bimDelta}{\biDelta}
\safemath{\bimLambda}{\biLambda}
\safemath{\bimPhi}{\biPhi}
\safemath{\bimSigma}{\biSigma}
\safemath{\bimOmega}{\biOmega}
\safemath{\bimTheta}{\biTheta}

%%%%% Sets %%%%%%%%%%%%%%%%%%
\safemath{\setA}{\mathcal{A}}
\safemath{\setB}{\mathcal{B}}
\safemath{\setC}{\mathcal{C}}
\safemath{\setD}{\mathcal{D}}
\safemath{\setE}{\mathcal{E}}
\safemath{\setF}{\mathcal{F}}
\safemath{\setG}{\mathcal{G}}
\safemath{\setH}{\mathcal{H}}
\safemath{\setI}{\mathcal{I}}
\safemath{\setJ}{\mathcal{J}}
\safemath{\setK}{\mathcal{K}}
\safemath{\setL}{\mathcal{L}}
\safemath{\setM}{\mathcal{M}}
\safemath{\setN}{\mathcal{N}}
\safemath{\setO}{\mathcal{O}}
\safemath{\setP}{\mathcal{P}}
\safemath{\setQ}{\mathcal{Q}}
\safemath{\setR}{\mathcal{R}}
\safemath{\setS}{\mathcal{S}}
\safemath{\setT}{\mathcal{T}}
\safemath{\setU}{\mathcal{U}}
\safemath{\setV}{\mathcal{V}}
\safemath{\setW}{\mathcal{W}}
\safemath{\setX}{\mathcal{X}}
\safemath{\setY}{\mathcal{Y}}
\safemath{\setZ}{\mathcal{Z}}
\safemath{\emptySet}{\varnothing}

%%%%% Collections %%%%%%%%%%%%%%%%%%
\safemath{\colA}{\mathscr{A}}
\safemath{\colB}{\mathscr{B}}
\safemath{\colC}{\mathscr{C}}
\safemath{\colD}{\mathscr{D}}
\safemath{\colE}{\mathscr{E}}
\safemath{\colF}{\mathscr{F}}
\safemath{\colG}{\mathscr{G}}
\safemath{\colH}{\mathscr{H}}
\safemath{\colI}{\mathscr{I}}
\safemath{\colJ}{\mathscr{J}}
\safemath{\colK}{\mathscr{K}}
\safemath{\colL}{\mathscr{L}}
\safemath{\colM}{\mathscr{M}}
\safemath{\colN}{\mathscr{N}}
\safemath{\colO}{\mathscr{O}}
\safemath{\colP}{\mathscr{P}}
\safemath{\colQ}{\mathscr{Q}}
\safemath{\colR}{\mathscr{R}}
\safemath{\colS}{\mathscr{S}}
\safemath{\colT}{\mathscr{T}}
\safemath{\colU}{\mathscr{U}}
\safemath{\colV}{\mathscr{V}}
\safemath{\colW}{\mathscr{W}}
\safemath{\colX}{\mathscr{X}}
\safemath{\colY}{\mathscr{Y}}
\safemath{\colZ}{\mathscr{Z}}

%%%%% Operators %%%%%%%%%%%
\safemath{\opA}{\mathbb{A}}
\safemath{\opB}{\mathbb{B}}
\safemath{\opC}{\mathbb{C}}
\safemath{\opD}{\mathbb{D}}
\safemath{\opE}{\mathbb{E}}
\safemath{\opF}{\mathbb{F}}
\safemath{\opG}{\mathbb{G}}
\safemath{\opH}{\mathbb{H}}
\safemath{\opI}{\mathbb{I}}
\safemath{\opJ}{\mathbb{J}}
\safemath{\opK}{\mathbb{K}}
\safemath{\opL}{\mathbb{L}}
\safemath{\opM}{\mathbb{M}}
\safemath{\opN}{\mathbb{N}}
\safemath{\opO}{\mathbb{O}}
\safemath{\opP}{\mathbb{P}}
\safemath{\opQ}{\mathbb{Q}}
\safemath{\opR}{\mathbb{R}}
\safemath{\opS}{\mathbb{S}}
\safemath{\opT}{\mathbb{T}}
\safemath{\opU}{\mathbb{U}}
\safemath{\opV}{\mathbb{V}}
\safemath{\opW}{\mathbb{W}}
\safemath{\opX}{\mathbb{X}}
\safemath{\opY}{\mathbb{Y}}
\safemath{\opZ}{\mathbb{Z}}
\safemath{\opZero}{\mathbb{O}}
\safemath{\identityop}{\opI}

%%%%%%%%%%%% Vectors and Matrices %%%%%%%%%%%%%%%

% vectors
\safemath{\veca}{\bma}
\safemath{\vecb}{\bmb}
\safemath{\vecc}{\bmc}
\safemath{\vecd}{\bmd}
\safemath{\vece}{\bme}
\safemath{\vecf}{\bmf}
\safemath{\vecg}{\bmg}
\safemath{\vech}{\bmh}
\safemath{\veci}{\bmi}
\safemath{\vecj}{\bmj}
\safemath{\veck}{\bmk}
\safemath{\vecl}{\bml}
\safemath{\vecm}{\bmm}
\safemath{\vecn}{\bmn}
\safemath{\veco}{\bmo}
\safemath{\vecp}{\bmp}
\safemath{\vecq}{\bmq}
\safemath{\vecr}{\bmr}
\safemath{\vecs}{\bms}
\safemath{\vect}{\bmt}
\safemath{\vecu}{\bmu}
\safemath{\vecv}{\bmv}
\safemath{\vecw}{\bmw}
\safemath{\vecx}{\bmx}
\safemath{\vecy}{\bmy}
\safemath{\vecz}{\bmz}

\safemath{\veczero}{\bmzero}
\safemath{\vecone}{\bmone}
\safemath{\vecxi}{\bmxi}
\safemath{\veclambda}{\bmlambda}
\safemath{\vecmu}{\bmmu}
\safemath{\vectheta}{\bmtheta}
\safemath{\vecphi}{\bmphi}
\safemath{\vecdelta}{\bmdelta}

% matrices
\safemath{\matA}{\bA}
\safemath{\matB}{\bB}
\safemath{\matC}{\bC}
\safemath{\matD}{\bD}
\safemath{\matE}{\bE}
\safemath{\matF}{\bF}
\safemath{\matG}{\bG}
\safemath{\matH}{\bH}
\safemath{\matI}{\bI}
\safemath{\matJ}{\bJ}
\safemath{\matK}{\bK}
\safemath{\matL}{\bL}
\safemath{\matM}{\bM}
\safemath{\matN}{\bN}
\safemath{\matO}{\bO}
\safemath{\matP}{\bP}
\safemath{\matQ}{\bQ}
\safemath{\matR}{\bR}
\safemath{\matS}{\bS}
\safemath{\matT}{\bT}
\safemath{\matU}{\bU}
\safemath{\matV}{\bV}
\safemath{\matW}{\bW}
\safemath{\matX}{\bX}
\safemath{\matY}{\bY}
\safemath{\matZ}{\bZ}
\safemath{\matzero}{\bmzero}

\safemath{\matDelta}{\bDelta}
\safemath{\matLambda}{\bLambda}
\safemath{\matPhi}{\bPhi}
\safemath{\matSigma}{\bSigma}
\safemath{\matOmega}{\bOmega}
\safemath{\matTheta}{\bTheta}

\safemath{\matidentity}{\matI}
\safemath{\matone}{\matO}

%%%%%% Random scalars, vectors and matrices %%%%%%%%

% scalars
\safemath{\rnda}{A}
\safemath{\rndb}{B}
\safemath{\rndc}{C}
\safemath{\rndd}{D}
\safemath{\rnde}{E}
\safemath{\rndf}{F}
\safemath{\rndg}{G}
\safemath{\rndh}{H}
\safemath{\rndi}{I}
\safemath{\rndj}{J}
\safemath{\rndk}{K}
\safemath{\rndl}{L}
\safemath{\rndm}{M}
\safemath{\rndn}{N}
\safemath{\rndo}{O}
\safemath{\rndp}{P}
\safemath{\rndq}{Q}
\safemath{\rndr}{R}
\safemath{\rnds}{S}
\safemath{\rndt}{T}
\safemath{\rndu}{U}
\safemath{\rndv}{V}
\safemath{\rndw}{W}
\safemath{\rndx}{X}
\safemath{\rndy}{Y}
\safemath{\rndz}{Z}

% vectors
\safemath{\rveca}{\bimA}
\safemath{\rvecb}{\bimB}
\safemath{\rvecc}{\bimC}
\safemath{\rvecd}{\bimD}
\safemath{\rvece}{\bimE}
\safemath{\rvecf}{\bimF}
\safemath{\rvecg}{\bimG}
\safemath{\rvech}{\bimH}
\safemath{\rveci}{\bimI}
\safemath{\rvecj}{\bimJ}
\safemath{\rveck}{\bimK}
\safemath{\rvecl}{\bimL}
\safemath{\rvecm}{\bimM}
\safemath{\rvecn}{\bimN}
\safemath{\rveco}{\bomO}
\safemath{\rvecp}{\bimP}
\safemath{\rvecq}{\bimQ}
\safemath{\rvecr}{\bimR}
\safemath{\rvecs}{\bimS}
\safemath{\rvect}{\bimT}
\safemath{\rvecu}{\bimU}
\safemath{\rvecv}{\bimV}
\safemath{\rvecw}{\bimW}
\safemath{\rvecx}{\bimX}
\safemath{\rvecy}{\bimY}
\safemath{\rvecz}{\bimZ}

\safemath{\rvecxi}{\bmxi}
\safemath{\rveclambda}{\bmlambda}
\safemath{\rvecmu}{\bmmu}
\safemath{\rvectheta}{\bmtheta}
\safemath{\rvecphi}{\bmphi}

% matrices
\safemath{\rmatA}{\bimA}
\safemath{\rmatB}{\bimB}
\safemath{\rmatC}{\bimC}
\safemath{\rmatD}{\bimD}
\safemath{\rmatE}{\bimE}
\safemath{\rmatF}{\bimF}
\safemath{\rmatG}{\bimG}
\safemath{\rmatH}{\bimH}
\safemath{\rmatI}{\bimI}
\safemath{\rmatJ}{\bimJ}
\safemath{\rmatK}{\bimK}
\safemath{\rmatL}{\bimL}
\safemath{\rmatM}{\bimM}
\safemath{\rmatN}{\bimN}
\safemath{\rmatO}{\bimO}
\safemath{\rmatP}{\bimP}
\safemath{\rmatQ}{\bimQ}
\safemath{\rmatR}{\bimR}
\safemath{\rmatS}{\bimS}
\safemath{\rmatT}{\bimT}
\safemath{\rmatU}{\bimU}
\safemath{\rmatV}{\bimV}
\safemath{\rmatW}{\bimW}
\safemath{\rmatX}{\bimX}
\safemath{\rmatY}{\bimY}
\safemath{\rmatZ}{\bimZ}

\safemath{\rmatDelta}{\bimDelta}
\safemath{\rmatLambda}{\bimLambda}
\safemath{\rmatPhi}{\bimPhi}
\safemath{\rmatSigma}{\bimSigma}
\safemath{\rmatOmega}{\bimOmega}
\safemath{\rmatTheta}{\bimTheta}

% standard-macros.tex
%
% Commonly used LaTeX macros.

\usepackage{amssymb}
\usepackage{amsfonts}
\usepackage{mathrsfs}
\usepackage{xspace}
\usepackage{bm}
\usepackage{fancyref}
\usepackage{textcomp}

\usepackage{multirow}
\usepackage{stmaryrd}

%%%%% New environments %%%%%%%

% tighter spacing for a inline matrix
\newenvironment{textbmatrix}{	\setlength{\arraycolsep}{2.5pt}%
								\big[\begin{matrix}}{\end{matrix}\big]%
								\raisebox{0.08ex}{\vphantom{M}}}

%%%%% Short forms for environments %%%%%

\def\be{\begin{equation}}
\def\ee{\end{equation}}
\def\een{\nonumber \end{equation}}
\def\mat{\begin{bmatrix}}
\def\emat{\end{bmatrix}}
\def\btm{\begin{textbmatrix}}
\def\etm{\end{textbmatrix}}

\def\ba#1\ea{\begin{align}#1\end{align}}
\def\bas#1\eas{\begin{align*}#1\end{align*}}
\def\bs#1\es{\begin{split}#1\end{split}}
\def\bg#1\eg{\begin{gather}#1\end{gather}}
\def\bml#1\eml{\begin{multline}#1\end{multline}}
\def\bi#1\ei{\begin{itemize}#1\end{itemize}}

%%%%% Useful abbreviations %%%%%

\newcommand{\lefto}{\mathopen{}\left}

 % for text mode subscripts

%%%%% Operators %%%%%

				% trace
			% Trace
				% diagonal matrix
			% rank of a matrix
				% adjunct matrix
				% vectorization
 % vec(X)
			% rect function
			% signum
			% sinc function
				% element of
			% interior of a set
		% arg min
		% arg max
		% limit in the mean
			% Kroneker Product
			% Hadamard Product
			% probability of an event
			% Fourier transform
\DeclareMathOperator{\Exop}{\opE}			% expectation operator
 % variance operator
% covariance operator
			% convolution operator
			 	% span of a set of vectors
			 	% convex hull
			% gradient
		% divergence
		% curl
			% alternate
				% error function
			% complementary error function
			% essential supremum

%%% Asymptotic notation

%%%%%% General Math Macros %%%%%%%%%%%%%%
	 			% nullspace
		 				% range
						% domain
					% orthogonal
\newcommand{\Ex}[2]{\ensuremath{\Exop_{#1}\lefto[#2\right]}} 	% expectation
%\newcommand{\Ex}[1]{\ensuremath{\Exop\lefto[#1\right]}} 	% expectation
 % variance
 % covariance
		% absolute value

			% cardinality of a set
	% set complement
					% set union

				% set intersection

 		% power set
				% indicator function
%\safemath{\interior}{\mathrm{Int}}			% interior of a set
%\newcommand{\domain}[1]{\mathscr{D}(#1)}	 	% domain of a mapping
		% vector norm
	% Frobenius norm
		% operator norm
 	% conjugate 		
 		% transpose
 	% hermitian transpose
 	% inverse
 	% Moore-Penrose pseudo-inverse
 	% optimum parameter
 	% estimate
 % estimate

\safemath{\dirac}{\delta}					% Dirac delta
\safemath{\krond}{\dirac}					% Kronecker delta
% all definition by Moritz
% all definition by Moritz
% all definition by Moritz
% all definition by Patrick

 % log det function
		% set notation
			% set closure
				% mod. Bessel fun. first kind
\safemath{\upto}{\uparrow}
\safemath{\downto}{\downarrow}
\safemath{\iu}{j}							% imaginary unit
\safemath{\ev}{\lambda}						% eigenvalue
\safemath{\hilseqspace}{l^{2}}				% Hilbert sequence space
\newcommand{\banachfunspace}[1]{\setL^{#1}}	% Banach function space
\safemath{\hilfunspace}{\banachfunspace{2}}	% Hilbert function space
			% Hilbert adjoint operator

%%%%%% Special Symbols for Communications %%%%%%%
\safemath{\SNR}{\textit{SNR}} 				% signal to noise ratio
\safemath{\PAR}{\textit{PAR}} 				% signal to noise ratio
\safemath{\No}{N_0}							% noise spectral density
\safemath{\Es}{E_s}							% energy per symbol
\safemath{\Eb}{E_b}							% energy per bit
\safemath{\EbNo}{\frac{\Eb}{\No}}
\safemath{\EsNo}{\frac{\Es}{\No}}

% Time-frequency notation
\DeclareMathOperator{\CHop}{\ensuremath{\opH}} % channel operator
\safemath{\tvir}{\rndh_{\CHop}}				% time-varying impulse response
\safemath{\tvtf}{\rndl_{\CHop}}				% 	-''- transfer function
\safemath{\spf}{\rnds_{\CHop}}				% spreading function
\safemath{\bff}{H_{\CHop}}					% bi-freuqency function

% correlation functions
\safemath{\ircf}{r_{h}}						% impulse response correlation fn.
\safemath{\tftvcf}{r_{s}}					% scattering function
\safemath{\tfcf}{r_{l}}						% time-frequency correlation fn.
\safemath{\bfcf}{r_{H}}						% bi-frequency correlation fn.

% WSSUS correlation functions
\safemath{\tcorr}{c_h}						% time-correlation function
\safemath{\scf}{c_{s}}						% spreading function
\safemath{\tfcorr}{c_{l}}					% transfer-function correlation
\safemath{\fcorr}{c_{H}}						% frequency-correlation function

\safemath{\mi}{I}							% mutual information
\safemath{\capacity}{C}						% capacity

%%%%%% probability theory stuff %%%%%%
				% probability density function
			% alternative pdf
				% probability mass function
			% alternative pmf
 			% probability distribution function
			% alternative cdf
				% moment generating function
					% entropy
					% differential entropy
 % relative entropy
\safemath{\normal}{\mathcal{N}}			% normal distribution
\safemath{\jpg}{\mathcal{CN}}			% jointly proper Gaussian
\safemath{\mchain}{\leftrightarrow}		% Markov chain
		% Chi-square RV
				% conditioning
 % Akaike's Information Criterion

%%%%% Unit Denominators %%%%%%%%%%
\safemath{\dB}{\,\mathrm{dB}}
\safemath{\dBm}{\,\mathrm{dBm}}
\safemath{\Hz}{\,\mathrm{Hz}}
\safemath{\kHz}{\,\mathrm{kHz}}
\safemath{\MHz}{\,\mathrm{MHz}}
\safemath{\GHz}{\,\mathrm{GHz}}
\safemath{\s}{\,\mathrm{s}}
\safemath{\ms}{\,\mathrm{ms}}
\safemath{\mus}{\,\mathrm{\text{\textmu}s}}
\safemath{\ns}{\,\mathrm{ns}}
\safemath{\ps}{\,\mathrm{ps}}
\safemath{\meter}{\,\mathrm{m}}
\safemath{\mm}{\,\mathrm{mm}}
\safemath{\cm}{\,\mathrm{cm}}
\safemath{\m}{\,\mathrm{m}}
\safemath{\W}{\,\mathrm{W}}
\safemath{\mW}{\, \mathrm{mW}}
\safemath{\J}{\,\mathrm{J}}
\safemath{\K}{\,\mathrm{K}}
\safemath{\bit}{\,\mathrm{bit}}
\safemath{\nat}{\,\mathrm{nat}}

%%%%% Binary Relations %%%%%

\safemath{\define}{\triangleq}			% definition

				% set notation
			% composition of mappings
% inner product <.,.>
\safemath{\equivalent}{\sim}
\safemath{\distas}{\sim}					% distributed according to
\safemath{\sdiff}{\Delta}				% symmetric set difference

% sets
\safemath{\reals}{\mathbb{R}}
\safemath{\positivereals}{\reals_{+}}
\safemath{\integers}{\mathbb{Z}}
\safemath{\posint}{\integers_{+}}
\safemath{\naturals}{\mathbb{N}}
\safemath{\posnaturals}{\naturals_{+}}
\safemath{\complexset}{\mathbb{C}}
\safemath{\rationals}{\mathbb{Q}}

%%% definitions for the fancyref  package
% prefixes
\newcommand*{\fancyrefapplabelprefix}{app}		% Appendix
\newcommand*{\fancyrefthmlabelprefix}{thm}		% Theorem
\newcommand*{\fancyreflemlabelprefix}{lem}		% Lemma
\newcommand*{\fancyrefcorlabelprefix}{cor}		% Corollary
\newcommand*{\fancyrefdeflabelprefix}{def}		% Definition
\newcommand*{\fancyrefproplabelprefix}{prop}		% Proposition
\newcommand*{\fancyrefexmpllabelprefix}{exmpl}
\newcommand*{\fancyrefalglabelprefix}{alg}		% Algorithm
\newcommand*{\fancyreftbllabelprefix}{tbl}		% Algorithm
\newcommand*{\fancyreftestlabelprefix}{est}		% Estimator
\newcommand*{\fancyrefsyslabelprefix}{sys}		% System Model

% label names
\frefformat{vario}{\fancyrefseclabelprefix}{Section~#1}
\frefformat{vario}{\fancyrefthmlabelprefix}{Theorem~#1}
\frefformat{vario}{\fancyreftbllabelprefix}{Table~#1}
\frefformat{vario}{\fancyreflemlabelprefix}{Lemma~#1}
\frefformat{vario}{\fancyrefcorlabelprefix}{Corollary~#1}
\frefformat{vario}{\fancyrefdeflabelprefix}{Definition~#1}
\frefformat{vario}{\fancyreffiglabelprefix}{Figure~#1}
\frefformat{vario}{\fancyrefapplabelprefix}{Appendix~#1}
\frefformat{vario}{\fancyrefeqlabelprefix}{(#1)}
\frefformat{vario}{\fancyrefproplabelprefix}{Proposition~#1}
\frefformat{vario}{\fancyrefexmpllabelprefix}{Example~#1}
\frefformat{vario}{\fancyrefalglabelprefix}{Algorithm~#1}
\frefformat{vario}{\fancyreftestlabelprefix}{Estimator~#1}
\frefformat{vario}{\fancyrefsyslabelprefix}{System Model~#1}

%\linespread{0.9535}
\addtolength{\subfigcapskip}{-2pt} %%% <- space before subcaption
\setlength{\textfloatsep}{15pt}
\allowdisplaybreaks 
\setlength{\columnsep}{0.24 in}  %%% NEEDED FOR EDAS
\newcommand{\conditionaltextstyle}{\textstyle}

\newtheorem{theorem}{Theorem}
\newtheorem{lemma}{Lemma}

\newtheorem{definition}{Definition}
\newtheorem{estimator}{Estimator}
\newtheorem{system}{System Model}

\newcommand{\B}[0]{D=128} % symbol for the number of BS antennas
\newcommand{\U}[0]{8} % symbol for the number of UEs

\newcommand{\median}[0]{\mathsf{m}} 
\newcommand{\samplemedian}[0]{\overline{\median}}
\newcommand{\sampleNo}[0]{\overline{N}_0}
\newcommand{\estimatedNo}[0]{\widehat{N}_0}
\newcommand{\sampleEs}[0]{\overline{E}_s}
\newcommand{\estimatedEs}[0]{\widehat{E}_s}
\newcommand{\sampleSNR}[0]{\overline{\SNR}}
\newcommand{\estimatedSNR}[0]{\widehat{\SNR}}

\newcommand{\Eo}[0]{E_0}
\newcommand{\sampleEo}[0]{\overline{E}_0}
\newcommand{\estimatedEo}[0]{\widehat{E}_0}
\newcommand{\Eh}[0]{E_s/p}
\newcommand{\MSE}[0]{\Eo}
\newcommand{\estimatedMSE}[0]{\estimatedEo}

\newcommand{\festnoargs}[0]{\mu}
\newcommand{\fest}[1]{\festnoargs(#1)} 
\newcommand{\abssquared}[1]{|#1|^{2}}

\safemath{\Tran}{\textnormal{T}}
\safemath{\Herm}{\textnormal{H}}

\safemath{\CN}{\mathcal{CN}}
\safemath{\N}{\mathcal{N}}

\safemath{\diag}{\textnormal{diag}}
\safemath{\trace}{\textnormal{trace}}

\begin{document}

\title{Blind SNR Estimation and Nonparametric Channel Denoising in Multi-Antenna mmWave Systems}

\author{\IEEEauthorblockN{Alexandra Gallyas-Sanhueza$^\text{1}$ and Christoph Studer$^\text{2}$} \\[-0.0cm]
\IEEEauthorblockA{\textit{$^\text{1}$Department of Electrical and Computer Engineering, Cornell University, Ithaca, NY; email: ag753@cornell.edu} \\
\textit{$^\text{2}$Dept.\ of Information Technology and Electrical Engineering, ETH Zurich, Zurich, Switzerland; email: studer@ethz.ch}\\[-0.32cm]
} 
\thanks{The work of AGS and CS was supported by ComSenTer, one of six centers in JUMP, a Semiconductor Research Corporation (SRC) program sponsored by DARPA. The work of CS was also supported by an ETH Research Grant and by the US NSF under grants CNS-1717559 and ECCS-1824379.}}

\maketitle

%%%%%%%%%%%%%%%%%%%%%%%%%%%%%%%%%%%%%%%%%%%%%%%%%%%%%%%%%%%%%%%%%%%
%%%%%%%%%%%%%%%%%%%%%%%%%%%%%%%%%%%%%%%%%%%%%%%%%%%%%%%%%%%%%%%%%%%

\begin{abstract}
We propose blind estimators for the average noise power, receive signal power, signal-to-noise ratio (SNR), and mean-square error (MSE), suitable for multi-antenna millimeter wave (mmWave) wireless systems. 
The proposed estimators can be computed at low complexity and solely rely on beamspace sparsity, i.e., the fact that only a small number of dominant propagation paths exist in typical mmWave channels. Our estimators can be used (i) to quickly track some of the key quantities in multi-antenna mmWave systems while avoiding additional pilot overhead and (ii) to design efficient nonparametric algorithms that require such quantities. 
We provide a theoretical analysis of the proposed estimators, and we demonstrate their efficacy via synthetic experiments and using a nonparametric channel-vector denoising task with realistic multi-antenna mmWave channels. 
\end{abstract}

%%%%%%%%%%%%%%%%%%%%%%%%%%%%%%%%%%%%%%%%%%%%%%%%%%%%%%%%%%%%%%%%%%%
%%%%%%%%%%%%%%%%%%%%%%%%%%%%%%%%%%%%%%%%%%%%%%%%%%%%%%%%%%%%%%%%%%%

% !TEX root = 20ICC_nonparametric.tex
% DO NOT REMOVE THE ABOVE COMMENT!

\section{Introduction}
Accurate knowledge of system parameters, such as average noise power, signal power, or the signal-to-noise ratio (SNR) is critical in communication systems, as many baseband processing tasks use these quantities~\cite{schenk2008rf}.
Most conventional communication systems dedicate a training stage to estimate such system parameters. However, since the propagation conditions can change at fast rates, especially at millimeter-wave (mmWave) frequencies where blockers or interferers may appear quickly~\cite{rappaport13a}, it is important to develop low-complexity solutions that accurately track such parameters.

Fortunately, modern wireless communication systems deal with high-dimensional data. For example, massive multiple-input multiple-output (MIMO) basestations are expected to be equipped with hundreds of antennas, or orthogonal frequency-division multiplexing (OFDM) systems have thousands of subcarriers. 
Since many of the signals in such systems are structured (e.g., exhibit sparsity or are taken from a discrete set), one can design statistical methods that blindly estimate parameters, without the need of a dedicated training stage. 

\subsection{Prior Art in Blind and Nonparametric Estimation}
Blind estimators rely on the signal statistics rather than pilot sequences.
Many of the existing blind noise variance and SNR estimators exploit modulation-specific structure, such as the cyclic prefix redundancy in OFDM~\cite{socheleau09a} or periodicity of synchronization sequences \cite{zivkovic09a}.
Other methods use expectation maximization (EM) with sophisticated statistical models. 
For example, EM has been successfully used for blind SNR estimation~\cite{Das12a} or to recover sparse signals~\cite{huang2020a}.
However, the iterative nature of EM and its relatively high per-iteration complexity renders such methods unsuitable for real-time estimation in multi-antenna mmWave wireless systems that operate with high-dimensional data at gigabit per second sampling rates.
In contrast, we propose a range of low-complexity blind estimators whose complexity is only $\setO(D)$, where $D$ is the dimension of the processed data.

In order to deal with algorithm parameters, nonparametric methods have been proposed recently.
The nonparametric equalizer (NOPE)~\cite{ghods2017optimally}, for example, performs linear minimum mean-square error (MSE) estimation  in massive MIMO systems without knowledge of the SNR. NOPE combines approximate message passing~\cite{donoho2009message,rangan11a} with Stein's unbiased risk estimate (SURE) \cite{stein81a} to automatically tune the algorithm parameters.  
The concept of estimating a subset of the algorithm parameters directly from 
the noisy measurements, has been used recently for adaptive denoising of mmWave~\cite{ghods19a,gallyas20a,mirfarshbafan2019beamspace} or OFDM~\cite{upadhya2016risk} channel vectors.
Denoising algorithms typically require two parameters (the noise power and a thresholding parameter), but since the threshold can be estimated using SURE from the noisy observations, such methods only need knowledge of the noise power.
In this paper, we propose low-complexity blind estimators, which enable the design of nonparametric (i.e., parameter free) channel-vector denoisers.

\subsection{Contributions}
Our main contributions are as follows. 
We propose low-complexity blind estimators for the average noise power, signal power, SNR, and MSE after applying an element-wise function to the noisy observation. 
We provide a theoretical analysis by developing bounds on the accuracy in the large-dimension limit that depend on the SNR and sparsity. 
We provide simulation results with synthetic data to demonstrate the efficacy and limits of our estimators in finite dimensions.
We showcase an application example of our blind estimators, which leads to a novel nonparametric channel-vector denoising algorithm for mmWave massive MIMO  systems. 

\subsection{Notation}
Lowercase and uppercase boldface letters denote column vectors and matrices, respectively.
The $d$th entry of the vector~$\bma\in\complexset^D$ is~$a_d$, the real and imaginary parts are $\Re\{\bma\}$ and $\Im\{\bma\}$, respectively, and we use $\bmb=\abssquared{\bma}$ to refer to $b_d=|a_d|^2$ for $d=1,\ldots,D$.
The soft-thresholding function with threshold $\tau$ is defined as $\eta(x;\tau)={x}/{|x|}\max\{|x|-\tau,0\}$ for $x\neq0$ and $\eta(x;\tau)=0$ for $x=0$, and is applied element-wise to vectors.
Statistical quantities are denoted by plain symbols, e.g., the variance $E_x=\frac{1}{D}\Ex{}{\|\bmx\|^2}$ of the random vector $\bmx\in\complexset^D$, where $\Ex{}{\cdot}$ denotes expectation; 
sample estimates are denoted by a bar, e.g., the sample variance $\overline{E}_x = \frac{1}{D}\|\bmx\|^2$; 
corresponding blind estimators are denoted by a hat, e.g.,~$\widehat{E}_x$.
For $x\in\reals$, rounding towards plus and minus infinity is denoted by $\lceil x\rceil$ and $\lfloor x \rfloor$, respectively, and $[x]_+=\max\{x,0\}$.
The big-O notation is $\setO(\cdot)$. 

%
% !TEX root = 20ICC_nonparametric.tex
% DO NOT REMOVE THE ABOVE COMMENT!

\section{Low-Complexity Blind Estimators}
\label{sec:nonparametricestimators}

\subsection{System Models}
\label{sec:systemmodel}
We say that a vector $\bms\in\complexset^D$ is \emph{sparse} if the number of nonzero entries $K$ is smaller than the dimension of the vector~$D$. As a sparsity measure, one can use the $\ell_0$ pseudo norm $\|\bms\|_0=K$, which counts the number of nonzero entries of $\bms$. 
As we will show in \fref{sec:theory}, sparsity is one of the key ingredients to our blind estimators.
In the rest of the paper, we focus on the following two system models.

\begin{system} \label{sys:systemmodel1}
Let $\bms\in\complexset^D$ be a sparse signal with average energy $\Es=\frac{1}{D}\Ex{}{\|\bms\|^2}$.
We model a noisy observation of the sparse signal using the following input-output relation:
\begin{align} \label{eq:inputoutputrelation}
\bmy = \bms + \bmn.
\end{align}
Here, $\bmy\in\complexset^D$ is the noisy observation and $\bmn\in\complexset^D$ models noise with circularly-symmetric i.i.d. complex Gaussian entries with variance $\No$.
We assume that the sparse signal vector $\bms$ and noise vector $\bmn$ are statistically independent. 
\end{system}
In what follows, we do not make assumptions on the signal sparsity $\|\bms\|_0=K$, even though the performance of the proposed estimators depends on this parameter; see \fref{sec:theory} for the details. 
Furthermore, unlike pilot-based estimators, we assume that the sparse vector $\bms$ is unknown, which makes parameter estimation nontrivial in our scenario.

\fref{sys:systemmodel1} finds numerous applications in wireless communication systems. Prime examples are modeling the estimated channel vectors in multi-antenna mmWave systems~\cite{ghods19a,gallyas20a,mirfarshbafan2019beamspace} or in OFDM systems~\cite{upadhya2016risk}, where the beamspace representation or the delay-domain representation of these channel vectors are typically sparse, respectively. 

\begin{system} \label{sys:systemmodel2}
Let $\bmy$ be a noisy observation as in~\fref{sys:systemmodel1}. 
Fix a weakly differentiable function $\festnoargs: \complexset \to \complexset$ that operates element-wise on vectors. We model the output as
\begin{align} \label{eq:inputoutputrelation2}
\fest{\bmy} = \bms + \bme,
\end{align}
where $\bme\in\mathbb{C}^D$ contains residual distortion.
\end{system}
We emphasize that in \fref{eq:inputoutputrelation2}, the residual distortion $\bme$ is not necessarily independent of the sparse signal $\bms$. 
\fref{sys:systemmodel2} is relevant in the following scenarios: 
(i) When estimating a sparse signal $\bms$ from a noisy measurement $\bmy$ by applying an (entry-wise) denoising or estimation function, producing the signal estimate $\tilde\bms=\fest{\bmy}$.
This scenario finds use for channel-vector denosing~\cite{ghods19a, gallyas20a}, for example. 
(ii) When modeling nonlinearlties caused by hardware impairments~\cite{jacobsson18d}, in which case the distorted version of the noisy received signal can be expressed as $\bmr=\fest{\bmy}$.
This scenario finds use in systems with low-resolution data converters~\cite{li17b,jacobsson17b}, for example.

\subsection{Low-Complexity Blind Estimators}
\label{sec:estimatorsusbection}
The blind estimators proposed next make frequent use of the sample median, which we define as follows. 

\begin{definition}[Sample Median]
Let $\bms\in\reals^D$ be a vector and $\bms^\text{sort}\in\reals^D$ be its sorted version (entries sorted in ascending order). Then, the \emph{sample median} is defined as 
\begin{align} \label{eq:samplemedian}
\samplemedian(\bms) = \conditionaltextstyle \frac{1}{2}\Big(s^\text{sort}_{\lfloor(D+1)/2\rfloor}+s^\text{sort}_{\lceil(D+1)/2\rceil}\Big).
\end{align}
\end{definition}
The sample median is {robust to outliers \cite{rousseeuw1993alternatives,huber2004robust},} which makes it amenable to \fref{sys:systemmodel1}, as the nonzero entries of the sparse vector $\bms$ can be considered to be outliers for the purpose of estimating the noise level.
We emphasize that the sample median can be computed at low complexity
in $\setO(D)$ average time using quickselect~\cite{tibshirani09a} or in $\setO(D)$ deterministic time using the MedianOfNinthers algorithm~\cite{alexandrescu17a}.

\begin{estimator}[Average Noise Power] \label{est:noisevariance}
Consider~\fref{sys:systemmodel1}. 
We propose the following blind estimator 
\begin{align} \label{eq:noiseestimator}
\estimatedNo = \conditionaltextstyle  \frac{\samplemedian(\abssquared{\bmy})}{\log(2)},
\end{align}
to estimate the average noise power defined as
\begin{align}
\No= \conditionaltextstyle \frac{1}{D}\Ex{}{\|\bmn\|^2}\!.
\end{align}
\end{estimator}

\fref{est:noisevariance} only requires  the absolute square entries of the noisy observation $\bmy$ in \fref{eq:inputoutputrelation} and can be computed efficiently. 
The estimator exploits sparsity in the signal $\bms$, but is independent of the sparsity level $K$, the signal power, or the statistical sparsity model.
It is, however, important to understand that the accuracy of this estimator depends on all of these factors as it relies on the fact that the nonzero entries of the sparse vector~$\bms$ can be treated as outliers when estimating the average noise power.
We note that this noise variance estimator can be seen as a complex-valued  (and squared) generalization\footnote{The squared median absolute deviation (MAD) \cite{huber2004robust} corresponds to $\samplemedian(|\bmy|)^2$ whereas we propose to use $\samplemedian(|\bmy|^2)$. While for even dimensions $D$ we have $\samplemedian(|\bmy|)^2 \leq \samplemedian(|\bmy|^2)$, both estimators coincide when $D$ is odd. What is more, our scaling factor $\log(2)$ differs from the widely-used scale factor of $(\Phi^{-1}(3/4))^2\approx (0.6745)^2$~\cite{donoho94}. The latter is derived for variance estimation of \emph{real-valued} Gaussians using MAD, while in our derivation we consider the case of \emph{complex-valued} Gaussians.} of the conventional median absolute deviation (MAD)~\cite{rousseeuw1993alternatives}, where we use the assumption that the noise in~\fref{sys:systemmodel1} is zero-mean. 
The intuition behind this estimator (and the $\log(2)$ scaling factor) is the fact that the entries ${|n_d|^2}/{(\No/2)}$, $d=1,\ldots,D$ are  $\chi^2$ distributed with two degrees of freedom, which have a median of $2\log(2)$.
A detailed derivation of this blind estimator and an analysis are provided in \fref{sec:noise_estimator_analysis}.

\begin{estimator}[Average Signal Power] \label{est:signalpower}
Consider~\fref{sys:systemmodel1}. 
We propose the following blind estimator
\begin{align} \label{eq:signalpowerestimator}
\estimatedEs =\conditionaltextstyle  \left[ \frac{\|\bmy\|^2}{D} - \estimatedNo \right]_{\!+}
\end{align}
for the average signal power defined as
 \begin{align}
\Es = \conditionaltextstyle \frac{1}{D}\Ex{}{\|\bms\|^2}\!.
\end{align}
\end{estimator}

\fref{est:signalpower} only uses the sample estimate of the average receive power $\overline{E}_y = \frac{1}{D}\|\bmy\|^2$ and the blind noise estimate~$\estimatedNo$ from \fref{est:noisevariance}. 
The derivation of this estimator and an analysis of its key properties are provided in \fref{sec:signal_estimator_analysis}.

\begin{estimator}[Signal-to-Noise Ratio]  \label{est:SNR}
Consider~\fref{sys:systemmodel1}. 
We propose the following blind estimator
\begin{align} \label{eq:SNRestimateomg}
\estimatedSNR =  \conditionaltextstyle \left[ \frac{\|\bmy\|^2}{D \estimatedNo} - 1 \right]_{\!+}
\end{align}
for the SNR defined as 
\begin{align}
\SNR = \conditionaltextstyle \frac{\Ex{}{\|\bms\|^2}}{\Ex{}{\|\bmn\|^2}}. 
\end{align}
\end{estimator}

\fref{est:SNR} is blind as it combines the sample estimate of the average receive power $\overline{E}_y = \frac{1}{D}\|\bmy\|^2$ and the blind estimate~$\estimatedNo$ from \fref{est:noisevariance}. 
The derivation of this estimator and an analysis of its key properties are provided in \fref{sec:SNR_estimator_analysis}.

\begin{estimator}[Mean-Square Error]   \label{est:MSE}
Consider~\fref{sys:systemmodel2} with a fixed function $\festnoargs: \complexset \to \complexset$.
We propose the following blind estimator  
\begin{align} \label{eq:MSEnonparametricexplicitform}
\estimatedMSE =\, & \conditionaltextstyle \frac{1}{D}\|\fest{\bmy}-\bmy\|_2^2 - \estimatedNo \notag \\
& \conditionaltextstyle + \frac{\estimatedNo}{D}\sum_{d=1}^D\left(\frac{\partial\Re\{\fest{y_d}\}}{\partial\Re\{y_d\}}+\frac{\partial\Im\{\fest{y_d}\}}{\partial\Im\{y_d\}}\right)
\end{align}
for the MSE defined as
\begin{align} \label{eq:MSEexpression}
\MSE = \conditionaltextstyle \frac{1}{D}\Ex{}{\|\fest{\bmy}-\bms\|^2} = \frac{1}{D}\Ex{}{\|\bme\|^2}\!.
\end{align}
\end{estimator}

\fref{est:MSE} only uses the receive signal $\bmy$,  the estimate~$\estimatedNo$ from \fref{est:noisevariance}, and the function $\festnoargs$.
Note that if we use the identity function $\fest{\bmy}=\bmy$, then the MSE corresponds to $\Eo=\No$ while the estimated MSE corresponds to  $\estimatedEo=\estimatedNo$, as expected.
The derivation of this estimator and an analysis of its key properties are provided in \fref{sec:MSE_estimator_analysis}.

% !TEX root = 20ICC_nonparametric.tex
% DO NOT REMOVE THE ABOVE COMMENT!

\section{Theory}
\label{sec:theory}

\subsection{Convergence of the Sample Median for $D\to\infty$}
We will use the following definition of the median.
\begin{definition}[Median]
Let $X$ be an absolutely continuous random variable (RV) with cumulative distribution function (CDF) $F_X(x)$.
Then, the \emph{median} $\median_X$ of $X$ is defined as
\begin{align}
F_X(\median_X) = \conditionaltextstyle  \frac{1}{2}. \label{eq:median}
\end{align}
\end{definition}

We will frequently make use of the following result.

\begin{lemma}[Lemma~C.1 from \cite{MM08}] \label{lem:convergenceofmedian}
Suppose that $f_X(x)$ is a differentiable probability density function (PDF) in some neighborhood of the median $\median_X$, and vector $\bmx$ contains i.i.d.\ samples of $X$.
Then, for any $c>0$ the sample median $\samplemedian(\bmx)$ satisfies  
\begin{align}
\lim_{D\to \infty} \Pr [ |\samplemedian(\bmx)-\median_X|\geq c ] = 0.
\end{align}
\end{lemma}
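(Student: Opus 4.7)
My plan is to bound each tail of the sample median separately and combine them with a union bound. Fix $c>0$ and split
\begin{align*}
\Pr[|\samplemedian(\bmx)-\median_X|\geq c] \leq \Pr[\samplemedian(\bmx)\geq \median_X+c] + \Pr[\samplemedian(\bmx)\leq \median_X-c].
\end{align*}
I will focus on the upper tail; the lower tail follows by the mirror argument. For clarity I assume $D$ is odd, so that \fref{eq:samplemedian} collapses to the single order statistic $x^\text{sort}_k$ with $k=(D+1)/2$; the even case follows since the two adjacent order statistics being averaged each satisfy the same concentration estimate, and hence so does their average.

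The upper-tail event $\{x^\text{sort}_k\geq \median_X+c\}$ is exactly $\{N^+\geq D-k+1\}=\{N^+\geq (D+1)/2\}$, where $N^+ \define \sum_{i=1}^D \mathbf{1}[X_i\geq \median_X+c]$ counts samples falling in $[\median_X+c,\infty)$. Since the $X_i$ are i.i.d., $N^+\sim\mathrm{Binomial}(D,p^+)$ with $p^+ \define 1-F_X(\median_X+c)$. The hypothesis that $f_X$ is differentiable on a neighborhood of $\median_X$ ensures $f_X$ is continuous and strictly positive there (if $f_X$ vanished on an interval adjacent to $\median_X$, the CDF would be flat and $\median_X$ would fail to be the unique solution of \fref{eq:median}); therefore $F_X$ is strictly increasing at $\median_X$, so $F_X(\median_X+c)>1/2$ and hence $p^+<1/2$.

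Setting $\epsilon \define 1/2 - p^+ > 0$ and applying Chebyshev's inequality to the Binomial $N^+$ gives
\begin{align*}
\Pr\!\left[\frac{N^+}{D}\geq \frac{1}{2}\right] \leq \Pr\!\left[\left|\frac{N^+}{D}-p^+\right|\geq \epsilon\right] \leq \frac{p^+(1-p^+)}{D\epsilon^2} \longrightarrow 0 \quad \text{as } D\to\infty.
\end{align*}
The same calculation with $p^- \define F_X(\median_X-c)<1/2$ controls the lower tail, and the union bound above then closes the proof. The only substantive step is justifying the strict inequalities $p^\pm<1/2$; this is precisely the role of the differentiability-and-positivity assumption on the PDF near $\median_X$, and once it is in hand the remainder is the classical Bernoulli-concentration argument for consistency of sample quantiles.
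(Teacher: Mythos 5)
Your argument is correct, and it is the classical proof of weak consistency of the sample median: reduce each tail event on the order statistic to a binomial tail ($N^\pm/D\geq 1/2$ versus success probability $p^\pm<1/2$) and kill it with Chebyshev. The paper itself does not prove this statement at all---it imports it verbatim as Lemma~C.1 of \cite{MM08}---so there is no in-paper proof to compare against; your self-contained derivation is a strictly stronger contribution than what the paper offers, and the reduction, the handling of even $D$ via the two adjacent order statistics, and the limit are all sound. One wording quibble: differentiability of $f_X$ near $\median_X$ does \emph{not} ``ensure $f_X$ is \ldots strictly positive there,'' and you do not need it to; what you actually need (and what your parenthetical correctly supplies) is that $F_X(\median_X+c)>\tfrac12>F_X(\median_X-c)$, which follows because $F_X$ cannot be flat at level $\tfrac12$ on an interval without destroying the uniqueness of the median implicit in \fref{eq:median}, and monotonicity of $F_X$ then extends the strict inequality from small $c$ to all $c>0$. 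With that sentence tightened, the proof is complete; replacing Chebyshev by Hoeffding would even give an exponential rate, though the lemma only asks for convergence in probability.
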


In words, \fref{lem:convergenceofmedian} implies that in the large-dimension limit, i.e., when $D\to\infty$, the sample median~$\samplemedian(\bmx)$ converges to the median $\median_X$.
Hence, by observing a large number of samples, which is possible in modern multi-antenna mmWave or OFDM systems, we can accurately estimate the true median. 

\subsection{Statistical Model for Complex-Valued Sparse Vectors}
\label{sec:statisticalmodel}
In order to derive and analyze the blind estimators proposed in \fref{sec:nonparametricestimators}, we need a suitable statistical model for the sparse signal $\bms$.
The statistical model should (i) have as few parameters as possible while being able to model a large class of complex-valued sparse vectors typically arising in communication systems and (ii) facilitate a theoretical analysis.
In what follows, we consider Bernoulli complex Gaussian (BCG) random vectors \cite{vila11a,rangan11a}, which allow control over the signal sparsity and the signal power.

\begin{definition}[BCG Random Vector] \label{def:BCG}
Each entry  in the sparse vector $\bms\in\complexset^D$ is nonzero with activity rate $p\in(0,1]$, and the nonzero entries are i.i.d.\ circularly-symmetric complex Gaussian with variance $\Eh$.
The PDF of each entry $s_d$, $d=1,\ldots,D$, is therefore given by
\begin{align}
\conditionaltextstyle f_S(s_d) = (1-p) \delta(s_d) + p \frac{1}{\pi \Eh} e^{-\frac{|s_d|^2}{\Eh}},
\end{align}
where $\delta(\cdot)$ is the Dirac delta function. 
\end{definition}

With this statistical model, the expected number of nonzero entries (the sparsity) is $K = pD$ and the average power of the sparse signal  vector $\bms$ corresponds to $\Es = \frac{1}{D}\Ex{}{\|\bms\|^2}$.

In \fref{sys:systemmodel1}, we assumed that the noise vector $\bmn$ is i.i.d.\ circularly-symmetric complex Gaussian with variance~$\No$ per complex entry. Hence, the PDF of each entry is given by $f_N(n_d) = \frac{1}{\pi\No}e^{-|n_d|^2/\No}$.
Consequently, the PDF of the noisy observation vector~$\bmy=\bms+\bmn$ in \fref{eq:inputoutputrelation} is as follows.

\begin{definition}[Noisy BCG Random Vector] \label{def:noisyBCG}
The PDF of the entries $y_d$, $d=1,\ldots,D$, of a BCG random vector per \fref{def:BCG} observed as  in~\fref{sys:systemmodel1}  is given by
\begin{align}
\conditionaltextstyle f_Y(y_d) \!=\! (1\!-\!p) \frac{1}{\pi \No} e^{-\frac{|y_d|^2}{\No}}   \conditionaltextstyle \!+ p \frac{1}{\pi (\No+\Eh)} e^{-\frac{|y_d|^2}{\No+\Eh}}\!.
\end{align}
\end{definition}

For this signal and observation model, we are now able to derive and analyze the proposed blind estimators.

\subsection{Analysis of \fref{est:noisevariance}}
\label{sec:noise_estimator_analysis}

We start with the blind noise variance estimator defined in \fref{est:noisevariance}.
We have the following key result. The proof is given in \fref{app:mainresult}.

\begin{theorem} \label{thm:mainresult}
Let $\bmy$ be a noisy BCG random vector with PDF as in \fref{def:noisyBCG} and with activity rate
\begin{align} 
p \leq\conditionaltextstyle  \frac{1/2 - e^{-2}}{ 1 - e^{-2} }  \approx 0.421. \label{eq:probabilitycondition}
\end{align}
Then, the average noise variance $\No$ satisfies 
\begin{align}  \label{eq:yummysandwichbound}
&\conditionaltextstyle  \frac{\median_Z}{ \min\left\{ \log\left(\frac{2-2p}{1-2p}\right), \log(2)(1+\SNR) \right\} }  \leq \No \notag \\
&\conditionaltextstyle  \qquad \qquad \qquad \qquad  \leq \frac{\median_Z}{\log(2)} \! \left( \!(1-p)+\frac{p^2}{p+\SNR}  \right)\!,
\end{align}
where $\median_Z$ is the median of an entry $z_d$ of $\bmz=\abssquared{\bmy}$.
\end{theorem}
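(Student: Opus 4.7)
The plan is to derive an implicit equation for $\median_Z$ from the mixture structure of $|\bmy|^2$ and then bound $\No$ in both directions by applying Jensen's inequality to this equation in opposite senses. Under \fref{sys:systemmodel1} with a BCG signal, each $y_d$ is $\jpg(0,\No)$ with probability $1-p$ and $\jpg(0,\No+\Eh)$ with probability $p$, so $\bmz=|\bmy|^2$ has i.i.d.\ entries drawn from a two-component mixture of exponentials with means $\No$ and $\No+\Eh$. Writing the CDF and setting $F_Z(\median_Z)=1/2$, and using $\Es=p\Eh$ so that $\No+\Eh = \No(p+\SNR)/p$, the median equation reduces to
\[
(1-p)\,e^{-\alpha} + p\,e^{-\alpha\beta} = \tfrac{1}{2},
\qquad \alpha := \median_Z/\No,\ \beta := p/(p+\SNR) \in (0,1].
\]

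For the upper bound on $\No$ I would apply Jensen's inequality to the convex function $e^{-x}$:
\[
\tfrac{1}{2} = (1-p)\,e^{-\alpha} + p\,e^{-\alpha\beta} \;\geq\; e^{-\alpha((1-p)+p\beta)},
\]
so that $\alpha((1-p)+p\beta) \geq \log 2$. Substituting $\beta = p/(p+\SNR)$ and using $\No = \median_Z/\alpha$ yields the right-hand side of \fref{eq:yummysandwichbound} immediately.

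The lower bound on $\No$ requires establishing both upper bounds on $\alpha$ that define the minimum. The first, $\alpha \leq \log((2-2p)/(1-2p))$, follows from the trivial estimate $e^{-\alpha\beta}\leq 1$ together with $p<1/2$. The second, $\alpha \leq (1+\SNR)\log 2$, needs a different viewpoint: I would rewrite the median equation as $(1-p)\,g(\No)+p\,g(\No+\Eh)=1/2$ where $g(\mu):=e^{-\median_Z/\mu}$, and verify by direct calculation that $g''(\mu) = g(\mu)\,\median_Z(\median_Z-2\mu)/\mu^4$. Hence $g$ is concave in $\mu$ precisely when $\median_Z \leq 2\mu$; if this holds throughout $[\No,\No+\Eh]$, then Jensen's inequality in the concave direction gives
\[
\tfrac{1}{2} \;\leq\; g\bigl((1-p)\No + p(\No+\Eh)\bigr) = g(\No(1+\SNR)) = e^{-\alpha/(1+\SNR)},
\]
which rearranges to $\alpha \leq (1+\SNR)\log 2$, i.e., $\No \geq \median_Z/((1+\SNR)\log 2)$.

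The main obstacle is establishing the required concavity of $g$, which reduces to $\median_Z \leq 2\No$ (the binding endpoint is $\mu=\No$). This is precisely where the hypothesis \fref{eq:probabilitycondition} enters: a short algebraic rearrangement shows that $p \leq (1/2-e^{-2})/(1-e^{-2})$ is equivalent to $\log((2-2p)/(1-2p)) \leq 2$, so the first lower bound $\median_Z \leq \No \log((2-2p)/(1-2p))$ already forces $\median_Z \leq 2\No$, bootstrapping the concavity needed for the second bound. Taking the minimum of the two resulting denominators and combining with the Jensen-convex upper bound on $\No$ yields the sandwich in \fref{eq:yummysandwichbound}.
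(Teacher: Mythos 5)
Your proof is correct and follows essentially the same route as the paper: the same median equation for the two-component exponential mixture of $|\bmy|^2$, the same three bounds (drop the second mixture term to get $\median_Z\leq \No\log\!\left(\frac{2-2p}{1-2p}\right)$; concave Jensen on $\mu\mapsto e^{-\median_Z/\mu}$ for the $(1+\SNR)\log 2$ bound; convex Jensen on $e^{-x}$ for the lower bound on $\median_Z$), combined in the same way. The only difference is cosmetic: you verify the concavity condition $\median_Z\leq 2\No$ by observing that the hypothesis on $p$ is algebraically equivalent to $\log\!\left(\frac{2-2p}{1-2p}\right)\leq 2$ and bootstrapping from your first bound, whereas the paper substitutes $2\No$ directly into the median equation; both arguments are valid and give the same sufficient condition \fref{eq:probabilitycondition}.
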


It is important to realize that for $\SNR\to0$ or $p\to0$, the lower and upper bounds in \fref{eq:yummysandwichbound} coincide and equal $\median_Z/\log(2)$.
Furthermore, we reiterate that in the large-dimension limit, i.e., when \mbox{$D\to\infty$}, the sample median~$\samplemedian(\bmz)$ converges to the true median~$\median_Z$ as established by \fref{lem:convergenceofmedian}.
Thus, in these two cases the proposed estimate corresponds to the true noise variance $\No=\median_Z/\log(2)$.
Hence, by assuming the large-dimension limit, \fref{thm:mainresult} has the following key implications on  \fref{est:noisevariance}: 
(i) Since
\begin{align} \label{eq:additionalbound}
\conditionaltextstyle \No \overset{(a)}{\leq}  \frac{\median_Z}{\log(2)} \! \left( \!(1-p)+\frac{p^2}{p+\SNR}  \right) \overset{(b)}{\leq} \frac{\median_Z}{\log(2)} = \estimatedNo,
\end{align}
where $(a)$ follows from \fref{eq:yummysandwichbound} and $(b)$ holds for every valid value of $p$ and $\SNR$,
the proposed blind estimate $\estimatedNo$ bounds the average noise variance~$\No$ from above, i.e., we have developed a pessimistic estimator. 
(ii) By letting $p\to0$, we have that $\No=\median_Z/\log(2)$ and the inequalities in \fref{eq:additionalbound} hold with equality; this implies that the average noise variance will approach the value of the blind estimator for sparse signals (irrespective of the SNR).
(iii) By letting $\SNR\to0$, we have that $\No=\median_Z/\log(2)$ and the inequalities in~\fref{eq:additionalbound} hold with equality; this implies that the noise variance will approach the value of the blind estimator at low SNR (irrespective of the signal's sparsity).
In summary, the proposed noise variance estimate is pessimistic but accurate for sparse vectors or in low-SNR scenarios for high-dimensional data. 
While the above observations are only true for our blind estimator in the large-dimension limit and for the noisy BCG model in \fref{def:noisyBCG}, we will use simulations in \fref{sec:results}  to demonstrate the accuracy of \fref{est:noisevariance} for finite (and small) dimensions~$D$ and showcase an application example for channel-vector denoising in multi-antenna mmWave communication systems.

\subsection{Analysis of \fref{est:signalpower}}
\label{sec:signal_estimator_analysis}
For the blind estimate $\estimatedEs$ of the average signal power~$\Es$, we use the following standard result, which follows  from the fact that the entries of the vector $\bmz= \abssquared{\bmy}$ are i.i.d. with expected value of $\Ex{}{z_d}=\Es+\No$, $d=1,\ldots,D$. 

\begin{lemma} 
Let $\bmy$ be a noisy BCG random vector with PDF as in  \fref{def:noisyBCG}. Then, we have that
\begin{align}
\lim_{D\to\infty}\conditionaltextstyle \frac{1}{D} \|\bmy\|^2  \overset{a.s.}{\longrightarrow} \Es + \No,
\end{align}
where  $\overset{a.s.}{\longrightarrow}$ implies almost sure convergence. 
\end{lemma}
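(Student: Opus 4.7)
The plan is to recognize this as a direct application of the strong law of large numbers (SLLN) to the sequence $\{z_d\}_{d=1}^{D}$ defined by $z_d = |y_d|^2$, leveraging the i.i.d.\ structure of the noisy BCG model. So the proof has three ingredients: (i) rewrite $\frac{1}{D}\|\bmy\|^2$ as a sample average; (ii) identify the common distribution and compute its mean; (iii) invoke SLLN.

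First, I would write $\frac{1}{D}\|\bmy\|^2 = \frac{1}{D}\sum_{d=1}^{D} z_d$ with $z_d = |y_d|^2$. Because $\bms$ has i.i.d.\ entries under \fref{def:BCG} and $\bmn$ has i.i.d.\ entries independent of $\bms$, the entries $y_d = s_d + n_d$ are i.i.d.\ with the marginal PDF $f_Y$ of \fref{def:noisyBCG}. Hence the $z_d$ are also i.i.d.

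Next, I would compute $\mu = \Ex{}{z_d}$. Using $y_d = s_d + n_d$ with $s_d \perp n_d$ and both zero-mean, the cross term vanishes and
\begin{align*}
\Ex{}{|y_d|^2} = \Ex{}{|s_d|^2} + \Ex{}{|n_d|^2} = \Es + \No,
\end{align*}
where the first equality uses independence and zero means, and the second uses the per-entry variances implied by \fref{def:BCG} (namely, $\Ex{}{|s_d|^2} = p\,\Eh = \Es$ for a BCG entry) and by the noise model. One can also verify this directly from $f_Y$ by integration against $|y_d|^2$, but the cleaner route is the decomposition above. In particular $\mu$ is finite, so Kolmogorov's SLLN applies.

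Finally, I would apply the SLLN to the i.i.d.\ sequence $\{z_d\}$ with finite mean $\mu = \Es + \No$ to conclude
\begin{align*}
\frac{1}{D}\|\bmy\|^2 = \frac{1}{D}\sum_{d=1}^{D} z_d \;\overset{a.s.}{\longrightarrow}\; \Es + \No \quad \text{as } D \to \infty.
\end{align*}
There is no real obstacle here: the only subtlety worth flagging is that one must justify the i.i.d.\ and finite-mean hypotheses from the BCG structure rather than treat them as axiomatic, but both follow immediately from \fref{def:BCG} and \fref{sys:systemmodel1}.
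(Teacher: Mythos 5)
Your proposal is correct and matches the paper's own (very brief) justification: the paper likewise treats this as a direct consequence of the entries $z_d=|y_d|^2$ being i.i.d.\ with mean $\Es+\No$ and invokes the law of large numbers. Your additional care in deriving $\Ex{}{|y_d|^2}=\Es+\No$ from the independence and zero-mean structure (with $\Ex{}{|s_d|^2}=p\,\Eh=\Es$) is a welcome filling-in of a step the paper leaves implicit, but it is not a different route.
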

By defining the RV $W_D = \frac{1}{D}\|\bmy\|^2-\No$, we have that $\lim_{D\to\infty}W_D   \overset{a.s.}{\longrightarrow}  \Es$. By replacing the average noise power $\No$ by the blind estimate~$\estimatedNo$ in~\fref{eq:noiseestimator} from \fref{est:noisevariance}  and by clipping the result, we obtain \fref{est:signalpower} in~\fref{eq:signalpowerestimator}.
Since, in the large-dimension limit, the noise power estimate $\estimatedNo$ is overestimating the true average noise power,  the blind estimate $\estimatedEs$ in \fref{eq:signalpowerestimator}  tends to underestimate the signal power. From  \fref{thm:mainresult} it follows that for $p\to0$ or $\SNR\to0$, the blind signal power estimate is exact. 
While the above observations only hold  for $D\to\infty$, we showcase their accuracy for finite dimensions~$D$ in \fref{sec:results}.

\subsection{Analysis of \fref{est:SNR}}
\label{sec:SNR_estimator_analysis}
The blind SNR estimator is obtained by simply taking the ratio of $\estimatedEs$ in \fref{eq:signalpowerestimator} and $\estimatedNo$ in~\fref{eq:noiseestimator}.
For \mbox{$D\to\infty$}, the blind signal estimate {underestimates} the average signal power and the noise power estimate {overestimates} the average noise power, which means that the blind SNR estimate in \fref{eq:SNRestimateomg} underestimates the SNR. 
From  \fref{thm:mainresult} it follows that for $D\to\infty$ with either $p\to0$ or $\SNR\to0$ the blind SNR estimate is exact.
We provide simulation results for finite dimensions $D$ in \fref{sec:results}.

\subsection{Analysis of \fref{est:MSE}}
\label{sec:MSE_estimator_analysis}
In order to analyze \fref{est:MSE}, we first assume that the average noise power $\No$ is  known. For this  scenario, we can borrow the following two theorems from \cite{mirfarshbafan2019beamspace}.

\begin{theorem}[Thm.~1 of \cite{mirfarshbafan2019beamspace}]
\label{thm:MSEappox}
Let $\bmy$ be a noisy random vector of observations of $\bms$ as in \fref{sys:systemmodel1},
and apply a weakly differentiable function $\festnoargs:\complexset\to\complexset$ to the entries of $\bmy$ as in \fref{sys:systemmodel2}.
Then, Stein's unbiased risk estimate given by
\begin{align} \label{eq:complexSURE}
  \textit{SURE}=\, &\conditionaltextstyle \frac{1}{D}\|\fest{\bmy}-\bmy\|_2^2 - \No\nonumber\\ 
& \conditionaltextstyle +\frac{\No}{D} \sum_{d=1}^D\left(\frac{\partial\Re\{\fest{y_d}\}}{\partial\Re\{y_d\}}+\frac{\partial\Im\{\fest{y_d}\}}{\partial\Im\{y_d\}}\right)\!,
\end{align}
is an unbiased estimate of the MSE so that $\Ex{}{\textit{SURE}} = \MSE.$
\end{theorem}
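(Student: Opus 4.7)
The plan is to reduce this complex-valued SURE identity to the classical real-valued Stein lemma applied coordinatewise to the independent real and imaginary parts of $\bmn$, each of which is i.i.d.\ $\mathcal{N}(0,\No/2)$. First I would use $\bms = \bmy-\bmn$ to write
\begin{align*}
\|\fest{\bmy}-\bms\|^2 = \|\fest{\bmy}-\bmy\|^2 + \|\bmn\|^2 + 2\Re\{\inner{\fest{\bmy}-\bmy}{\bmn}\},
\end{align*}
where $\inner{\bma}{\bmb}=\bma^H\bmb$. Taking expectations, the first summand reproduces the leading $\frac{1}{D}\|\fest{\bmy}-\bmy\|^2$ of \SURE, while $\Ex{}{\|\bmn\|^2}=D\No$ and the easy computation $\Ex{}{\Re\{\inner{\bmy}{\bmn}\}} = \Ex{}{\|\bmn\|^2} = D\No$ (using $\Ex{}{\bar{s_d}\,n_d}=0$) together account for the $-\No$ term. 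The entire content of the theorem is therefore the identification of the residual cross term $\Ex{}{\Re\{\inner{\fest{\bmy}}{\bmn}\}}$ with the divergence sum.

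To evaluate that cross term, I would proceed coordinate by coordinate, leveraging the fact that $\fest{\cdot}$ acts element-wise so $\fest{y_d}$ depends on $y_d$ alone. Writing $\fest{y_d} = u + \iu v$ with $u,v$ real functions of $(y_{d,R},y_{d,I})$, one has $\Re\{\overline{\fest{y_d}}\,n_d\} = u\,n_{d,R} + v\,n_{d,I}$. Conditioning on $\bms$ and on $n_{d,I}$, the variable $n_{d,R}\sim\mathcal{N}(0,\No/2)$ is Gaussian and $y_{d,R} = s_{d,R}+n_{d,R}$ is affine in $n_{d,R}$, so the one-dimensional Stein lemma gives
\begin{align*}
\Ex{}{u\, n_{d,R}\given \bms, n_{d,I}} = \tfrac{\No}{2}\, \Ex{}{\tfrac{\partial u}{\partial y_{d,R}}\given \bms, n_{d,I}}.
\end{align*}
The symmetric conditioning (on $\bms, n_{d,R}$) handles $v\,n_{d,I}$. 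Taking outer expectations, summing over $d$, substituting back into the decomposition, and dividing by $D$ yields exactly $\Ex{}{\SURE}=\MSE$.

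The main obstacle is not the algebra but justifying the scalar Stein identity under the stated regularity. The assumption that $\fest{\cdot}$ is \emph{weakly differentiable} is precisely what is needed to push through the integration-by-parts against the Gaussian density in Stein's original argument, and it simultaneously supplies the integrability required to exchange $\partial$ and $\Ex{}{\cdot}$. A secondary subtlety is that $\bms$ is itself random; this causes no trouble, because conditional on $\bms$ the noise $\bmn$ is independent Gaussian with the prescribed variance, so the classical lemma applies pathwise in $\bms$ and the full identity follows by the tower property.
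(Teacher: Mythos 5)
Your proof is correct. Note that the paper offers no proof of its own here---the statement is imported verbatim as Theorem~1 of \cite{mirfarshbafan2019beamspace}---so there is nothing to diverge from; your argument (expand $\|\fest{\bmy}-\bms\|^2$ via $\fest{\bmy}-\bms=(\fest{\bmy}-\bmy)+\bmn$, cancel $\Ex{}{\|\bmn\|^2}=D\No$ against $\Ex{}{\Re\{\inner{\bmy}{\bmn}\}}=D\No$ using the independence of $\bms$ and $\bmn$, and convert the remaining cross term $\Ex{}{\Re\{\inner{\fest{\bmy}}{\bmn}\}}$ into the divergence sum by applying the scalar Stein identity separately to the $\mathcal{N}(0,\No/2)$ real and imaginary noise components, conditioning on $\bms$ and the other component) is the standard derivation of complex-valued SURE, and your handling of the two genuine subtleties (weak differentiability licensing the integration by parts, and the tower property over the random $\bms$) is sound.
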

Under the same assumptions, we have the following result which characterizes the behavior in the  large-dimension limit. 

\begin{theorem}[Thm.~3 of \cite{mirfarshbafan2019beamspace}] \label{thm:SUREconvergence}
In the large-dimension limit, i.e., when $D\to\infty$, $\textit{SURE}$ in \fref{eq:complexSURE} converges to the MSE in~\fref{eq:MSEexpression}, i.e., we have $\lim_{D\to\infty}\textit{SURE} = \MSE.$
\end{theorem}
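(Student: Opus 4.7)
The plan is to combine the unbiasedness result of \fref{thm:MSEappox} with a law-of-large-numbers argument that leverages the i.i.d.\ structure of the noisy BCG random vector. Concretely, I would first rewrite $\textit{SURE}$ in~\fref{eq:complexSURE} as a sample mean of per-entry contributions,
\begin{align*}
\textit{SURE} = \frac{1}{D}\sum_{d=1}^D T_d - \No,
\end{align*}
where
\begin{align*}
T_d = |\fest{y_d}-y_d|^2 + \No\!\left(\frac{\partial\Re\{\fest{y_d}\}}{\partial\Re\{y_d\}}+\frac{\partial\Im\{\fest{y_d}\}}{\partial\Im\{y_d\}}\right)\!.
\end{align*}
Since $\fest{\cdot}$ is applied element-wise, and since $\bms$ has i.i.d.\ entries by \fref{def:BCG} while $\bmn$ has i.i.d.\ entries by \fref{sys:systemmodel1}, the summands $\{T_d\}_{d=1}^D$ are i.i.d.\ copies of a single random variable $T_1$.

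Next, I would invoke the strong law of large numbers to obtain $\frac{1}{D}\sum_{d=1}^D T_d \to \Ex{}{T_1}$ almost surely as $D\to\infty$, provided $\Ex{}{|T_1|}<\infty$. Hence $\textit{SURE}\to\Ex{}{T_1}-\No$ almost surely. To identify this limit as $\MSE$, I would apply \fref{thm:MSEappox} at every finite $D$: taking expectations gives $\Ex{}{T_1}-\No = \Ex{}{\textit{SURE}} = \MSE$, where the last equality uses that, for i.i.d.\ entries, the MSE in~\fref{eq:MSEexpression} equals the per-entry expectation $\Ex{}{|\fest{y_1}-s_1|^2}$ and is therefore independent of $D$. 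Chaining the two steps yields $\lim_{D\to\infty}\textit{SURE} = \MSE$ almost surely, which also implies the weaker convergence asserted in the theorem.

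The main obstacle is verifying the integrability condition $\Ex{}{|T_1|}<\infty$, which requires controlling both the residual $|\fest{y_d}-y_d|^2$ and the weak partial derivatives of $\fest{\cdot}$ under the Gaussian-mixture law of $y_d$ in \fref{def:noisyBCG}. For the entry-wise denoisers of interest (e.g., soft thresholding), $|\fest{y_d}-y_d|$ is bounded by the threshold and the partial derivatives are bounded almost everywhere by constants, so integrability is immediate. For a general weakly differentiable $\fest{\cdot}$, a mild growth condition such as $|\fest{y}|\leq C(1+|y|)$ together with an almost-everywhere bound on the partial derivatives suffices, since it dominates $T_1$ by a polynomial in $|y_1|$ whose expectation against the Gaussian-mixture density is finite.
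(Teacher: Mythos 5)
The paper does not actually prove this statement: \fref{thm:SUREconvergence} is imported verbatim as Thm.~3 of \cite{mirfarshbafan2019beamspace}, so there is no in-paper argument to compare against. Judged on its own terms, your proof is sound and is the natural one: the decomposition of $\textit{SURE}$ into $\frac{1}{D}\sum_d T_d - \No$ is algebraically correct, the identification $\Ex{}{T_1}-\No = \Ex{}{\SURE} = \MSE$ via \fref{thm:MSEappox} is legitimate (and the observation that $\MSE$ is $D$-independent under i.i.d.\ entries is exactly what makes the limit statement meaningful), and you correctly isolate the integrability of $T_1$ as the only analytic condition needed for the strong law. Your almost-sure conclusion is stronger than the unspecified mode of convergence in the theorem statement.

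Two caveats on scope. First, the theorem as stated inherits its hypotheses from \fref{thm:MSEappox}, which only assumes \fref{sys:systemmodel1}; that model requires $\bmn$ to be i.i.d.\ Gaussian and independent of $\bms$, but does \emph{not} require the entries of $\bms$ to be i.i.d. Your argument needs the i.i.d.\ structure of $y_d$ (which you import from \fref{def:BCG}), so what you have proved is the theorem under the BCG model of \fref{sec:statisticalmodel} rather than under the bare hypotheses of \fref{sys:systemmodel1}; for a general sparse $\bms$ one would need a concentration argument for independent but non-identically-distributed summands (e.g., a variance bound plus Chebyshev, or a martingale/McDiarmid-type inequality), which is presumably closer to what \cite{mirfarshbafan2019beamspace} does. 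Second, a point worth flagging even though it is outside the theorem as stated: in the paper's actual application the threshold $\tau$ in $\fest{\cdot\,;\tau}$ is selected adaptively from the whole vector $\bmy$, which makes $\fest{\cdot}$ data-dependent and destroys the i.i.d.\ decomposition of the $T_d$; your proof covers only a fixed, deterministic $\fest{\cdot}$, and extending it to the data-adaptive case requires a uniform-convergence argument over the threshold family.
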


These results imply that if $\No$ were known perfectly, one can estimate the MSE without knowing the sparse signal vector~$\bms$ in the large-dimension limit when $D\to\infty$. 
To obtain the blind version in \fref{est:MSE}, we have replaced the true average noise power $\No$ by its estimate $\estimatedNo$. 
Consequently, for $D\to\infty$ and either $p\to0$ or $\SNR\to0$, \fref{thm:mainresult} implies that $\estimatedNo$ will be exact, which implies that \fref{est:MSE} will be exact in this scenario. 
To demonstrate the efficacy of this estimator in  finite dimensions, we will show synthetic results and an application for mmWave channel-vector denoising in \fref{sec:results}.

% !TEX root = 20ICC_nonparametric.tex
% DO NOT REMOVE THE ABOVE COMMENT!

\newcommand{\trials}{10000} 
\newcommand{\figW}{0.715\columnwidth}

\begin{figure*}
	\centering
	\subfigure[Average noise power]{\includegraphics[width=\figW]{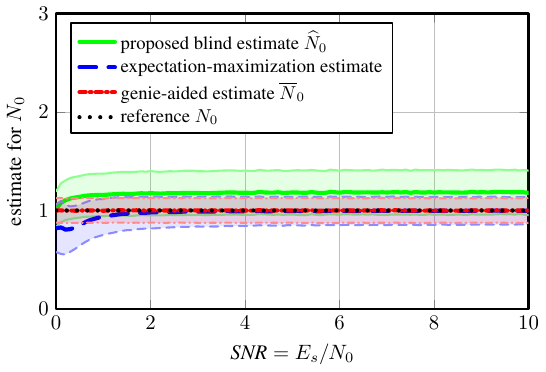}}
	\hspace{1.5cm}
	\subfigure[Average signal power]{\includegraphics[width=\figW]{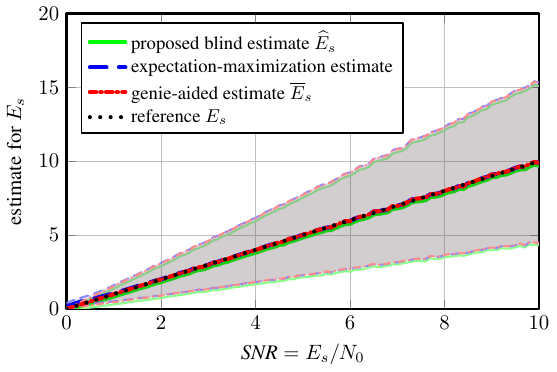}} \\
	\vspace{-0.1cm}
	\subfigure[Signal-to-noise ratio]{\includegraphics[width=\figW]{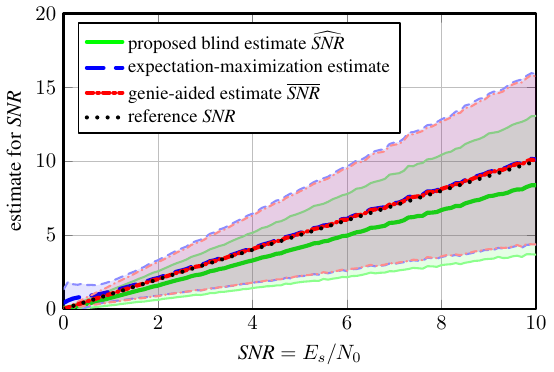}}
	\hspace{1.5cm}
	\subfigure[Mean-square error]{\includegraphics[width=\figW]{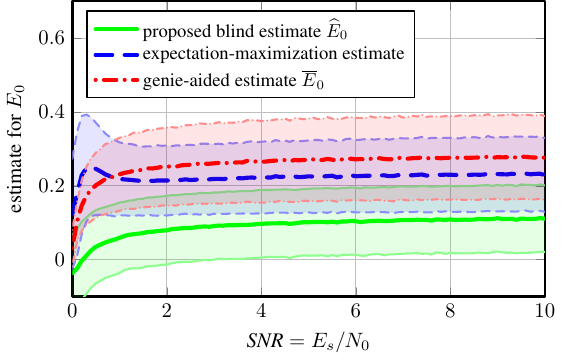}}
	\vspace{-0.1cm}
	\caption{Effect of varying the SNR on the proposed low-complexity blind estimators for the average noise power, signal power, SNR, and MSE.}
	\label{fig:varyingSNR}
\vspace{-0.35cm}	
\end{figure*}

\section{Numerical Results}
\label{sec:results}
 
\subsection{Synthetic Results}
We perform Monte--Carlo simulations with 10,000 trials to characterize the accuracy of the estimators proposed in \fref{sec:estimatorsusbection}.
We use the sparse signal model in \fref{def:noisyBCG} and fix $\No=1$, without loss of generality. We show the effect of the $\textit{SNR}=\Es/\No$ on the proposed estimators for an activity rate of $p=0.1$ and a dimension of $D=64$.  

\fref{fig:varyingSNR} shows the accuracy when varying the SNR. The results for our blind estimators are shown in green, where thick solid lines refer to the average performance and light green areas indicate the standard deviation.
For the MSE $\Eo$, we pick the soft-thresholding function $\fest{\bmy}=\eta(\bmy;\tau)$.
Furthermore, for each received vector $\bmy$, we adaptively select the denoising threshold $\tau\geq0$ that minimizes the estimated MSE $\estimatedEo$ as done in~\cite{ghods19a}.
As a comparison, we include an EM implementation specialized for circularly-symmetric complex Gaussian mixtures (average performance shown with a blue dashed line and standard deviation with a light blue area) with a maximum of 30 iterations and early stopping if the total parameter change is below $0.1$\%.
We also show the accuracy of genie-aided estimators (average performance shown with a red dash-dotted line and standard deviation with a light red area) that have separate knowledge of $\bmn$ and~$\bms$. 
We compute $\sampleEs = \frac{1}{D}\|\bms\|^2$, $\sampleNo = \frac{1}{D}\|\bmn\|^2$, $\sampleSNR= \frac{\sampleEs}{\sampleNo}$, $\sampleEo = \frac{1}{D} \|\eta(\bmy;\tau)-\bms\|^2$.  
The reference parameters used in our simulations are shown with black dotted lines. Note that there is no reference value for~$\Eo$, as the adaptive threshold prevents us from computing~$\Eo$ analytically.

From \fref{fig:varyingSNR}, we observe the following facts:
(i) The standard deviation of the proposed blind estimators is comparable to that of the genie-aided methods that have separate access to~$\bmn$ and $\bms$. 
(ii) Even though the sample size is small ($D=64$), our estimators are quite accurate with a standard deviation comparable to that of the genie-aided estimators; increasing~$D$ would further reduce the standard deviation of all considered estimators.
(iii) As predicted by our theory, the average noise power {is} overestimated {while} the signal power {and SNR are} underestimated. At low SNR, the three become exact.
(iv) For the considered scenario, the {blind MSE estimate has a negative offset.}
%is appreciably underestimated. 
However, the key requirement for adaptive parameter tuning is that the {estimated} MSE $\estimatedEo$ (which is the function to be minimized) has a similar shape as~$\Eo$ 
{and thus, for this purpose, we do not clip negative values.} 
As we show next, the MSE estimate still performs well in practice.

In comparison with EM, our method provides a less-accurate estimate at higher SNRs, but requires significantly lower complexity. 
 The complexity (in terms of the number of real-valued additions, real-valued multiplications, and exponentials) of EM is more than  $N(16D+12)+3D$ operations, where~$N$ is the number of EM iterations---the average number of iterations observed in our simulations ranges from $8$ to $28$ depending on the SNR. 
In contrast, our proposed median-based noise estimator has an average complexity of no more than $7.7D+9$ operations, when computing the median using quickselect~\cite{tibshirani09a}. 
Hence, our proposed blind estimator is more than $17\times$ less complex than EM (and avoids the evaluation of complex operations such as  exponentials and divisions), which renders our method suitable for (i) low-complexity parameter estimation and (ii) as a potential initializer for EM-based methods---the latter aspect is part of ongoing research.

\subsection{Application to Nonparametric Channel-Vector Denoising}
\begin{figure}
\vspace{-0.15cm}
	\centering
	\subfigure[Uncoded BER vs.~SNR]{\includegraphics[width=\figW]{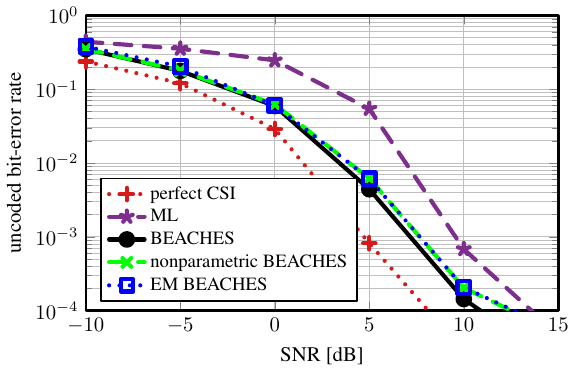}}
	\subfigure[MSE vs.~SNR]{\includegraphics[width=\figW]{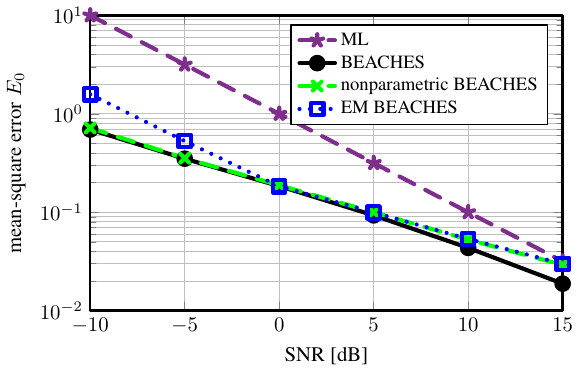}}
	\vspace{-0.1cm}
	\caption{Uncoded BER (a) and MSE (b) of mmWave channel estimation methods, including the nonparametric BEACHES variant which estimates the noise variance and denoising parameter directly from the receive vector.}
	\label{fig:mad}
\vspace{-0.35cm}
\end{figure}

We now show  an application of \fref{est:MSE} for beamspace channel estimation. 
As in \cite{ghods19a}, we simulate an uplink massive MIMO system in which $\U$ single-antenna user equipments (UEs) transmit pilots and data to a BS equipped with a uniform linear array of $\B$ antenna elements.
We consider  channel estimation with orthogonal pilots. The maximum likelihood (ML) estimate of the channel matrix is obtained by right-multiplying the received pilot sequence with the inverse of the orthogonal pilot matrix. In this case, the beamspace representation of the estimated channel matrix $\widetilde\bH$ is  given by $\widetilde\bH = \bH + \widetilde\bN$, where $\bH\in\complexset^{128\times8}$ is the beamspace channel matrix, $\widetilde\bN$ is complex Gaussian noise, and $\widetilde\bH$ is the beamspace ML channel estimate, which is a noisy observation of $\bH$.

Since electromagnetic waves at high frequencies experience strong attenuation and little scattering or diffraction, typical mmWave channels consist only of a small number of dominant propagation paths.
As each row index in the beamspace channel matrix corresponds to an angle-of-arrival to the BS, each column of the beamspace channel matrix $\bH$ (which is the \emph{channel vector} of one UE) will be sparse.
As a consequence, by writing each column of $\widetilde\bH$ as an independent equation, we can express the channel estimation problem in the form of \fref{sys:systemmodel1}, where each column of the channel matrix (that contains only few nonzero entries) corresponds to the sparse signal $\bms$. 
This observation implies that we can perform denoising in order to improve the ML channel estimate.

\fref{fig:mad} shows simulation results for 10,000 Monte--Carlo trials with line-of-sight (LoS) mmMAGIC QuaDRiGa mmWave channels~\cite{jaeckel2019quadriga}.
For different channel estimation methods we compute the MSE, and the bit-error rate (BER) with linear minimum MSE equalization using the estimated channels and uncoded 16-QAM transmission.
We simulate beamspace channel estimation (BEACHES) as in \cite{ghods19a}, which denoises the 
columns of the beamspace ML estimate $\widetilde\bH$ by applying the soft-thresholding function $\eta(x;\tau)$.
The thresholding parameter $\tau$ is adaptively selected for each noisy observation by minimizing SURE with perfect knowledge of the average noise power~$\No$ using an $\setO(D\log(D))$ algorithm.
We compare this to a new, nonparametric BEACHES variant, where we apply soft-thresholding denoising to the beamspace channel vectors and use the (nonparametric) threshold $\tau$ that minimizes \fref{est:MSE}, which is a nonparametric version of SURE.
We also include a variant that we call EM BEACHES, where we use \fref{est:MSE} but replace $\estimatedNo$ by the EM noise power estimate.
The methods described above, after denoising the beamspace channel vectors, use the inverse Fourier transform to obtain an antenna-domain channel estimate.
As a reference, we show the performance of perfect channel state information (CSI) that uses the ground truth (noiseless) channel vector, and maximum likelihood (ML) estimation that simply takes the inverse Fourier transform of the beamspace noisy observation~$\widetilde\bH$ as the antenna-domain channel estimate.

From \fref{fig:mad}, we observe that the nonparametric BEACHES algorithm achieves virtually the same performance as that of the original BEACHES algorithm which requires knowledge of~$\No$ (except at high SNR where \fref{est:noisevariance} tends to overestimate $\No$). We reiterate that the nonparametric BEACHES algorithm requires no parameters and exhibits exactly the same complexity as the original algorithm as the latter already sorts the entries of $\abssquared{\bmy}$, which we can reuse to compute \fref{est:noisevariance}.
EM BEACHES achieves higher (worse) MSE, as realistic channels deviate from the BCG model in \fref{def:BCG}, and exhibits higher complexity than our nonparametric method.

% !TEX root = 20ICC_nonparametric.tex
% DO NOT REMOVE THE ABOVE COMMENT!

 \section{Conclusions}
We have proposed blind estimators  for the average noise power, signal power, SNR, and MSE. Our estimators can be calculated in $\setO(D)$ time and only require the noisy observation vector, which avoids the need for additional pilot signals. 
We have analyzed our estimators for a complex Bernoulli-Gaussian sparsity model and evaluated their accuracy via simulations. 
Using a  channel-vector denoising task in multi-antenna mmWave systems, we have demonstrated that our blind estimators lead to a novel nonparametric denoiser that achieves comparable performance and the same complexity as BEACHES in \cite{ghods19a,mirfarshbafan2019beamspace} which requires knowledge of the average noise power. 
We believe that the proposed blind estimators find potential use in a large number of other applications  in wireless communication systems that contain sparse signals and require low complexity.

\appendices
% !TEX root = 20ICC_nonparametric.tex
% DO NOT REMOVE THE ABOVE COMMENT!

\section{Proof of \fref{thm:mainresult}}
\label{app:mainresult}
 
\subsection{Prerequisites}
In what follows, we need the distribution of $\bmz=\abssquared{\bmy}$.
Since the absolute-square entry of a circularly-symmetric complex Gaussian RV $Q$ with variance $E_q$ is exponentially distributed with CDF $F_Q(q)=1 - e^{-\frac{q}{{E_q}}}$, $q\geq0$, the CDF of each entry of the absolute-square noisy observation is as follows.

\begin{definition}[Noisy BCG Power RV] \label{def:noisyBCGsquared}
Let $\bmy$ be as in \fref{def:noisyBCG} and let $\bmz=\abssquared{\bmy}$.
Then, for $z\geq0$, the CDF of each  entry of $\bmz$ is given by
\begin{align} \label{eq:noisyBCGpower}
F_Z(z_d) = \, & (1-p)(1 - e^{-\frac{z_d}{\No}})  
 + p(1 - e^{-\frac{z_d}{\No+\Eh}}).  \
\end{align}
\end{definition}

\subsection{Upper Bounds on the Median}
We start with the following upper bound on the median $\median_Z$ of a noisy BCG power RV~$Z$ with CDF given in \fref{eq:noisyBCGpower}. 

\begin{lemma}  \label{lem:smollemma1}
For a noisy BCG power RV in \fref{def:noisyBCGsquared} with $p < 0.5$, the median is bounded  from above by
\begin{align}
\conditionaltextstyle \No\log\!\left(\frac{2-2p}{1-2p}\right)\! & \geq  \median_Z.
\end{align}
\end{lemma}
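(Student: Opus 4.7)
The plan is to use the fact that $F_Z$ is strictly increasing on $[0,\infty)$, so the bound $\median_Z \leq m^\star$ for a proposed value $m^\star$ is equivalent to $F_Z(m^\star) \geq \tfrac{1}{2}$. I would therefore define
\begin{align*}
m^\star \define \No\log\!\left(\frac{2-2p}{1-2p}\right),
\end{align*}
which is well defined and positive precisely under the hypothesis $p < \tfrac{1}{2}$, and verify the inequality $F_Z(m^\star) \geq \tfrac{1}{2}$ by direct substitution into the expression given in \fref{def:noisyBCGsquared}.

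The key algebraic step is that $m^\star$ is designed exactly so that $e^{-m^\star/\No} = \frac{1-2p}{2-2p}$. Plugging this into the first term of the CDF yields
\begin{align*}
(1-p)\!\left(1 - e^{-m^\star/\No}\right) = (1-p)\cdot\frac{1}{2(1-p)} = \frac{1}{2},
\end{align*}
so the first summand alone already matches the median condition. The second summand $p\bigl(1 - e^{-m^\star/(\No+\Eh)}\bigr)$ is nonnegative because $p \geq 0$ and $1 - e^{-x} \geq 0$ for $x \geq 0$. Adding these two contributions gives $F_Z(m^\star) \geq \tfrac{1}{2} = F_Z(\median_Z)$, and monotonicity of $F_Z$ then delivers $\median_Z \leq m^\star$.

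I do not expect a serious obstacle here: the bound is essentially the median of the noise-only component of the mixture, and the proof amounts to observing that the additional signal component only pushes probability mass to the right of $m^\star$, which can only decrease (never increase) the median. The only subtlety worth flagging explicitly is the standing requirement $p < \tfrac{1}{2}$, which is needed both for the logarithm to be defined and for the bound to be nonvacuous; this is where the stronger hypothesis \fref{eq:probabilitycondition} of the main theorem will ultimately be used downstream when combining this lemma with its lower-bound counterpart.
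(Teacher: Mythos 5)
Your proof is correct and is essentially the paper's argument in reverse: the paper drops the nonnegative signal term from the median equation $(1-p)(1-e^{-\median_Z/\No})+p(1-e^{-\median_Z/(\No+\Eh)})=\tfrac{1}{2}$ and solves the resulting inequality for $\median_Z$, whereas you substitute the candidate $m^\star$ into $F_Z$ and invoke monotonicity --- the same decomposition and the same use of $p<\tfrac{1}{2}$ for the logarithm. No gap; your explicit appeal to the strict monotonicity of $F_Z$ is a slight tidiness improvement over the paper's implicit inversion step.
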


\begin{proof}
Using \fref{eq:noisyBCGpower}, we obtain the expression for the median of a RV $Z$ with CDF in \fref{eq:noisyBCGpower} according to equation \fref{eq:median}:
\begin{align}
(1-p)(1 - e^{-\frac{\median_Z}{\No}}) + p(1 - e^{-\frac{\median_Z}{\No+\Eh}}) & =\conditionaltextstyle  \frac{1}{2}.
\end{align}
Since the second term is nonnegative, we can omit it to obtain the following inequality:
\begin{align} \label{eq:expression0}
(1-p)(1 - e^{-\frac{\median_Z}{\No}})   & \leq\conditionaltextstyle \frac{1}{2}.
\end{align}
Note that this bound will be useful for vectors $\bms$ that are sparse, i.e., where $p$ is small. We can simplify \fref{eq:expression0} as follows
\begin{align}
\conditionaltextstyle \frac{\median_Z}{\No} & \leq \conditionaltextstyle -\log\left(\frac{1-2p}{2-2p}\right) \label{eq:logarithmcondition} 
\end{align}
which leads to an upper bound on the median $\median_Z$.
In order to take the logarithm in \fref{eq:logarithmcondition}, we require $p\in(0,1/2)$.
\end{proof}

\begin{lemma} \label{lem:smollemma2}
For a noisy BCG power RV $Z$ in \fref{def:noisyBCGsquared} with $p\leq \frac{1/2 - e^{-2} }{ 1 - e^{-2} }$, the median is bounded  from above by
\begin{align}
\log(2)(\No + \Es)  & \geq \median_Z.
\end{align}
\end{lemma}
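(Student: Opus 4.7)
The plan is to prove the equivalent inequality $F_Z(z^*) \geq \tfrac{1}{2}$ with $z^* = \log(2)(\No+\Es)$; since the mixture CDF from \fref{def:noisyBCGsquared} is strictly increasing, this forces $\median_Z \leq z^*$. Rewriting this inequality, the task reduces to establishing
\[
(1-p)\, e^{-z^*/\No} + p\, e^{-z^*/(\No+\Eh)} \leq \tfrac{1}{2},
\]
where, using $\Es = p\Eh$, the two exponents evaluate to $z^*/\No = \log(2)(1+\SNR)$ and $z^*/(\No+\Eh) = \log(2)(\No+\Es)/(\No+\Eh)$. I would split on whether $z^*/\No \geq 2$ or $z^*/\No < 2$; this is natural because $2$ is exactly the value at which the constant $e^{-2}$ appearing in the hypothesis becomes operational.

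In the first regime $z^*/\No \geq 2$ (equivalently $\SNR \geq 2/\log(2) - 1$), the plan is to combine the trivial bound $e^{-z^*/(\No+\Eh)} \leq 1$ with $e^{-z^*/\No} \leq e^{-2}$ to obtain
\[
(1-p)\, e^{-z^*/\No} + p\, e^{-z^*/(\No+\Eh)} \leq (1-p)\, e^{-2} + p,
\]
and then observe that the hypothesis $p \leq (\tfrac{1}{2}-e^{-2})/(1-e^{-2})$ is algebraically equivalent to $(1-p)\, e^{-2} + p \leq \tfrac{1}{2}$, closing this case.

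The main obstacle lies in the regime $z^*/\No < 2$, where the naive termwise bound above falls short of $\tfrac{1}{2}$. My plan is to reinterpret the left-hand side as an expectation and apply Jensen's inequality. Introduce the two-point random variable $X$ taking the value $\alpha = \No/(\No+\Es) = 1/(1+\SNR)$ with probability $1-p$ and the value $\beta = (\No+\Eh)/(\No+\Es)$ with probability $p$; then $\Ex{}{X} = 1$ by construction, and with $f(x) = 2^{-1/x}$ we can write
\[
(1-p)\, e^{-z^*/\No} + p\, e^{-z^*/(\No+\Eh)} = \Ex{}{f(X)}.
\]
A direct computation gives $f''(x) = f(x)\log(2)\, x^{-3}\,(\log(2)/x - 2)$, so $f$ is concave on $[\log(2)/2,\infty)$. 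The regime hypothesis $z^*/\No < 2$ is exactly $\alpha > \log(2)/2$, and $\beta \geq 1$ holds unconditionally since $\Es = p\Eh \leq \Eh$. Both support points of $X$ therefore lie in the concave region of $f$, so Jensen's inequality yields $\Ex{}{f(X)} \leq f(\Ex{}{X}) = f(1) = \tfrac{1}{2}$, closing the case.

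The key technical subtlety is that termwise bounding in the low-$\SNR$ case cannot work, and the success of the Jensen argument rests on the exact coincidence between the concavity threshold $\log(2)/2$ of $f(x) = 2^{-1/x}$ and the complement of the first-case threshold $z^*/\No = 2$: both occur at $1+\SNR = 2/\log(2)$, so the two regimes together cover all $\SNR \geq 0$ with no gap.
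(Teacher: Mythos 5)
Your proof is correct, and it reorganizes the paper's argument in a genuinely different way. The paper works at the median itself: it first uses the hypothesis on $p$ (via the same termwise bound $(1-p)e^{-2}+p\le\tfrac12$ that you derive) to establish $\median_Z\le 2\No$, which certifies that both points $\No/\median_Z$ and $(\No+\Eh)/\median_Z$ lie in the region $r\ge 1/2$ where $g(r)=e^{-1/r}$ is concave, and then applies Jensen once to the median equation to obtain $\tfrac12\le e^{-\median_Z/(\No+\Es)}$. You instead evaluate the CDF at the candidate threshold $z^*=\log(2)(\No+\Es)$ and show $F_Z(z^*)\ge\tfrac12$; because $z^*$ can exceed $2\No$ at high SNR (pushing $\alpha=1/(1+\SNR)$ into the convex region of $f$), you need the case split, with the termwise bound closing the high-SNR case on its own and Jensen closing the low-SNR case using nothing about $p$ beyond $p\le 1$. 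Your verification that the two regimes meet exactly at $1+\SNR=2/\log(2)$ is the right thing to check and is correct. The two proofs thus use identical ingredients --- Jensen applied to a reciprocal-exponential and the inequality $(1-p)e^{-2}+p\le\tfrac12$ --- but your version makes explicit that the condition on $p$ only does work in the regime $\SNR\ge 2/\log(2)-1$, whereas the paper's single-pass argument avoids the case split because the median, unlike $z^*$, always stays below $2\No$ under that condition.
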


\begin{proof}
From \fref{eq:noisyBCGpower}, we have that
\begin{align}
\conditionaltextstyle\frac{1}{2} = (1-p)e^{-\frac{\median_Z}{\No}} + p e^{-\frac{\median_Z}{\No+\Eh}}.     \label{eq:niceform}
\end{align}
Let us define the function $g(r)=e^{-1/r}$ with $r>0$. We can now rewrite \fref{eq:niceform} as follows:
\begin{align}
\conditionaltextstyle \frac{1}{2} = (1-p)g\!\left(\frac{\No}{\median_Z}\right)  + p g\!\left(\frac{\No+\Eh}{\median_Z}\right)\!. \label{eq:nicejensenexpression}
\end{align}
The function $g(r)$ is concave for $r \geq 1/2$, which holds when
\begin{align}
2\No  & \geq \median_Z. \label{eq:muleq2Ee} 
\end{align}
We first verify when \fref{eq:muleq2Ee} holds. From \fref{eq:niceform}, we have that
\begin{align}
\conditionaltextstyle\frac{1}{2} & \geq (1-p)e^{-\frac{2\No}{\No}} + p e^{-\frac{2\No}{\No+\Eh}} \\
\conditionaltextstyle\frac{1}{2} - e^{-2} & \geq   p \left(  1 - e^{-2} \right)\!,
\end{align}
which implies that the condition \fref{eq:probabilitycondition} ensures concavity of $g(r)$. 
By assuming that the condition \fref{eq:probabilitycondition} holds, we can use Jensen's inequality on the expression in \fref{eq:nicejensenexpression}  to get
\begin{align}
\conditionaltextstyle\frac{1}{2} &\leq  \conditionaltextstyle g\left((1-p)\frac{\No}{\median_Z} + p\frac{\No+\Eh}{\median_Z}\right) 
 =  e^{-\median_Z \frac{1}{\No + \Es } }.
\end{align}
We can now simplify this expression to
\begin{align}
\log(1/2)  & \conditionaltextstyle \leq -\median_Z \frac{1}{\No + \Es } \\
\log(2)(\No + \Es)  & \geq \median_Z, \label{eq:probabilityupperbound}
\end{align}
which is what we show in \fref{lem:smollemma2}. 
\end{proof} 

\subsection{Lower Bound on the Median}
We now establish the following lower bound on the median. 
\begin{lemma} \label{lem:smollemma3}
For a noisy BCG power RV $Z$ in \fref{def:noisyBCGsquared} with $p \in (0,1]$, the median is bounded  from below by
\begin{align} \label{eq:bound3}
\conditionaltextstyle \frac{\log(2)\No}{ (1-p)+\frac{p^2}{p+\SNR}  } &  \leq \median_Z.
\end{align}
\end{lemma}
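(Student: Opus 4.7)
The plan is to start from the defining equation of the median \fref{eq:niceform}, namely $\frac{1}{2} = (1-p)e^{-\median_Z/\No} + p\,e^{-\median_Z/(\No+\Eh)}$, and apply Jensen's inequality to the (convex) exponential function $\phi(x)=e^x$, with weights $1-p$ and $p$ applied to the points $x_1=-\median_Z/\No$ and $x_2=-\median_Z/(\No+\Eh)$. In contrast to \fref{lem:smollemma2}, which exploited concavity of $g(r)=e^{-1/r}$ in the variable $\No/\median_Z$ to obtain an upper bound, here convexity of $\phi$ acts in the \emph{opposite} direction and delivers a lower bound.

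Concretely, Jensen's inequality yields
\begin{align*}
\exp\!\left(-(1-p)\frac{\median_Z}{\No} - p\frac{\median_Z}{\No+\Eh}\right)\leq (1-p)e^{-\median_Z/\No} + p\,e^{-\median_Z/(\No+\Eh)} = \frac{1}{2}.
\end{align*}
Taking $-\log$ (which reverses the inequality) gives
\begin{align*}
(1-p)\frac{\median_Z}{\No} + p\frac{\median_Z}{\No+\Eh} \geq \log(2),
\end{align*}
so that $\median_Z \geq \log(2)\big/\!\big[(1-p)/\No + p/(\No+\Eh)\big]$. I would then simplify the bracketed term by factoring $1/\No$ and using $\Es = p\Eh$ together with $\SNR=\Es/\No$, which implies $\No+\Eh = \No(p+\SNR)/p$ and hence $p\No/(\No+\Eh) = p^2/(p+\SNR)$. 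This converts the bound into the form stated in \fref{eq:bound3}.

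The proof is essentially routine once one spots the right place to apply Jensen; there is no subtle obstacle. Unlike \fref{lem:smollemma2}, no restriction on $p$ is needed, since the convexity of $\phi(x)=e^x$ is unconditional — this is consistent with the statement that \fref{lem:smollemma3} holds for all $p\in(0,1]$. A small sanity check worth performing in the write-up: as $p\to 0$ the denominator tends to $1$ and the bound reduces to $\log(2)\No$, matching the upper bound in \fref{lem:smollemma2} (which also collapses to $\log(2)\No$ when $\Es\to 0$ or $p\to 0$), consistent with the ``sandwich'' claim in \fref{thm:mainresult}.
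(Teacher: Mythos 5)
Your proposal is correct and follows essentially the same route as the paper: both apply Jensen's inequality to the exponential in the median equation $(1-p)e^{-\median_Z/\No}+p\,e^{-\median_Z/(\No+\Eh)}=\tfrac{1}{2}$ and then substitute $\No+\Eh=\No(p+\SNR)/p$ to reach \fref{eq:bound3}. The only cosmetic difference is that you invoke convexity of $e^x$ directly, whereas the paper phrases the same step as concavity of the exponential CDF; your observation that no restriction on $p$ is needed here is also consistent with the paper.
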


\begin{proof}
Since the exponential CDF $F_Q(q)=1 - e^{-\frac{q}{{E_q}}}$ for ${E_q}\geq0$ is concave in $q$, Jensen's inequality leads to
\begin{align}
(1-p)(1 - e^{-\frac{\median_Z}{\No}}) + p(1 - e^{-\frac{\median_Z}{\No+\Eh}}) &\conditionaltextstyle  = \frac{1}{2} \\
1 - e^{-(1-p)\frac{\median_Z}{\No}-p\frac{\median_Z}{\No+\Eh}} & \conditionaltextstyle \geq \frac{1}{2}.
\end{align}
We can simplify this expression  to obtain the following bound
\begin{align}
\conditionaltextstyle \frac{1}{2} & \conditionaltextstyle\geq  e^{-(1-p)\frac{\median_Z}{\No}-p\frac{\median_Z}{\No+\Eh}} \\
\log\left(2\right) \No &\conditionaltextstyle \leq \median_Z \left( (1-p)+\frac{p^2}{p+\SNR} \right)\!,
\end{align}
which is what we have in \fref{eq:bound3}
\end{proof}

\subsection{Combining the Results}
Finally, we can combine  \fref{lem:smollemma1} with \fref{lem:smollemma2} and \fref{lem:smollemma3} to obtain the desired result in~\fref{eq:yummysandwichbound}.
%

%%%%%%%%%%%%%%%%%%%%%%%%%%%%%%%%%%%%%%%%%%%%%%%%%%%%%%%%%%%%%%%%%%%
%%%%%%%%%%%%%%%%%%%%%%%%%%%%%%%%%%%%%%%%%%%%%%%%%%%%%%%%%%%%%%%%%%%

\bibliographystyle{IEEEtran}
\balance
\bibliography{bib/confs-jrnls,bib/IEEEabrv,bib/publishers,bib/vipbib,bib/sm_ref}

\end{document}